\definecolor{greenish}{RGB}{27,158,119}
\definecolor{MyOrange}{RGB}{230,97,1}
\definecolor{MyPurple}{RGB}{94,60,153}
\definecolor{myLightPurple}{RGB}{178,171,210}
\def\NP{\mathsf{NP}}
\newcommand{\algmargin}{\the\ALG@thistlm}
\algnewcommand{\parState}[1]{\State%
    \parbox[t]{\dimexpr\linewidth-\algmargin}{\strut\hangindent=\algorithmicindent \hangafter=1 #1\strut}}
\theoremstyle{plain}
\newtheorem{thm}{Theorem}
\newtheorem{lem}[thm]{Lemma}
\theoremstyle{definition}
\newtheorem{defn}[thm]{Definition}
\newtheorem{rem}[thm]{Remark}
\newcommand\cuparrow{\ensurestackMath{\stackinset{c}{0ex}{c}{0.2ex}{\scriptscriptstyle \downarrow}{\cup}}}
\newcommand\bigcuparrow{\ensurestackMath{\stackinset{c}{0ex}{c}{0.2ex}{\scriptscriptstyle \downarrow}{\bigcup}}}
\def\SOP{\textsc{Shortest Odd Path}}
\def\DISP{\textsc{Shortest Two Disjoint Paths}}
\def\SPaPrOP{\textsc{Shortest Parity-Constrained Odd Path}}
\def\sq{\textup{sqn}}
\def\even{\mathrm{even}}
\def\odd{\mathrm{odd}}
\def\T{\mathcal{T}}
\title{Shortest odd paths in undirected graphs\\ with conservative weight functions}
\author{
\hfill
\and
Alp\'ar J\"uttner\thanks{ELKH-ELTE Egerváry Research Group, Eötvös Loránd Research Network (ELKH), Budapest, Hungary.} \thanks{Department of Operations Research, ELTE Eötvös Loránd University, Budapest, Hungary. Email: \texttt{alpar@cs.elte.hu,csaba.kiraly@ttk.elte.hu,lyd21@student.elte.hu,gyula.pap@ttk.elte.hu}.}
\and
Csaba Kir\'aly\footnotemark[1] \footnotemark[2] 
\and
Lydia Mirabel Mendoza-Cadena\footnotemark[2] \thanks{MTA-ELTE Momentum Matroid Optimization Research Group, Budapest, Hungary.} 
\and
Gyula Pap\footnotemark[1] \footnotemark[2]
\and
Ildik\'o Schlotter\thanks{Centre for Economic and Regional Studies, and also Budapest University of Technology and Economics, Budapest, Hungary. Email: \texttt{schlotter.ildiko@krtk.hu}.}
\and
Yutaro Yamaguchi\thanks{Department of Information and Physical Sciences, Graduate School of Information Science and Technology, Osaka University, Osaka, Japan. Email: \texttt{yutaro.yamaguchi@ist.osaka-u.ac.jp}.}
}
\date{ }
\begin{document}

\maketitle

\begin{abstract}
    We consider the \SOP{} problem, where given an undirected graph $G$, a weight function on its edges, and two vertices $s$ and $t$ in~$G$, the aim is to find an $(s,t)$-path with odd length and, among all such paths, of minimum weight. For the case when the weight function is conservative, i.e., when every cycle has non-negative total weight, the complexity of the \SOP{} problem had been open for 20 years, and was recently shown to be $\NP$-hard. 
    
    We give a polynomial-time algorithm for the special case when the weight function is conservative and the set~$E^-$ of negative-weight edges forms a single tree. Our algorithm exploits the strong connection between \SOP{}
    and the problem of finding two internally vertex-disjoint paths between two terminals in an undirected edge-weighted graph. 
    It also relies on solving an intermediary problem variant called \SPaPrOP{}  
    where for certain edges we have parity constraints on their position along the path.

    Also, 
    we exhibit two FPT algorithms for solving \SOP{} in graphs with conservative weight functions. 
    The first FPT algorithm is parameterized by $|E^-|$, the number of negative edges, or more generally, by the maximum size of a matching in the subgraph of~$G$ spanned by~$E^-$. 
    Our second FPT algorithm is parameterized by the treewidth of~$G$.

    \medskip
    
    \noindent \textbf{Keywords:} shortest odd path, parity constrained odd path, fixed-parameter algorithms, treewidth, monadic second-order logic

\end{abstract}

\section{Introduction} \label{sec:intro}
Lov\'{a}sz~\cite{schrijver-book} asked for the complexity of the following question nearly 20 years ago: 
Given two vertices~$s$ and~$t$ in an undirected graph $G$ and a weight function~$w$ on the edges of~$G$ such that each cycle in~$G$ has non-negative weight, find a minimum-weight odd path between~$s$ and~$t$. Throughout the paper, a path is \emph{odd} (or even) if it contains an odd (even, respectively) number of edges.

Recently, the problem was shown to be $\NP$-hard
by Schlotter and Sebő~\cite{schlotter2022shortest}, even when the weight function is conservative and takes values only from~$\{-1,1\}$. 
Note that replacing ``odd'' with ``even'' does not change the computational complexity of the problem: 
finding a minimum-weight odd (or even) $(s,t)$-path (i.e., 
a path between $s$ and~$t$) in~$G$ can easily be reduced to finding a minimum-weight even (or 
odd, respectively) path between $s$ and a newly added vertex~$t'$ in the graph obtained from~$G$ by adding an edge $tt'$ with weight~$0$. 
Note also that finding a shortest odd (even) path between two vertices $s$ and~$t$ generalizes the shortest path problem: 
adding new vertices $t_1$ and~$t_2$ along with 0-weight edges $tt_2$, $tt_1$, and $t_1t_2$, a shortest odd (or even, resp.) $(s,t_2)$-path in the resulting graph yields a shortest $(s,t)$-path in the original graph. 
For unweighted graphs with possibly negative but \emph{conservative} edge-weights (meaning that there are no cycles with negative total weight), the shortest path problem is solvable in polynomial time as shown by Seb\H{o} in \cite{Sebo86}; however, the usual shortest path algorithms for directed graphs are not applicable, since bidirecting an edge with negative weight creates a negative cycle.

If the weights are non-negative, a minimum-weight odd path between fixed vertices can be found via computing a maximum-weight matching problem in an auxiliary graph. Thomassen attributes this algorithm to Edmonds~\cite{Thomassen85}, while Gr\"otschel and Pulleybank refer to it as ``Waterloo folklore''~\cite{GP81}. 
A Dijsktra-like algorithm was given for the problem by Derigs~\cite{derigs1985shortest}. 
For the unweighted case, LaPaugh and Papadimitriou \cite{lapaugh1984even} gave a linear-time algorithm that computes an odd (or 
even) path between two vertices, whenever such a path exists;
Arkin et al.~\cite{APY91} achieved linear-time solvability even for finding a shortest odd (or even) path between two vertices.

The generalized problem of finding a shortest non-zero path in group-labeled graphs with non-negative edge weights has also been studied. 
A recent work of Iwata and Yamaguchi \cite{yamaguchi2022groupLabeled} shows a deterministic and strongly polynomial-time algorithm for this problem.

For directed graphs, the problem becomes much harder: already deciding whether a path of prescribed parity exists between two given vertices is $\NP$-hard, as proved independently by Thomassen~\cite{Thomassen85} and by Lapaugh and Papadimitriou~\cite{lapaugh1984even}. 

The closely related problem of finding parity-constrained cycles is also of interest, with particular emphasis on odd holes, where a hole is an induced cycle of length at least four. Chudnovsky et al.~\cite{chudnovsky2020hole} recently showed that the problem of detecting if a graph contains an odd hole can be solved in polynomial time. Since then, their work inspired several algorithms in this area.

\paragraph{Our results.}
We consider the \SOP{} problem defined as follows:

\begin{center}
\fbox{ 
\parbox{13.6cm}{
\begin{tabular}{l}\SOP{}:  \end{tabular} \\
\begin{tabular}{p{1cm}p{11.5cm}}
\textbf{Input}: & An undirected graph $G=(V,E)$, a weight function~$w\colon E \to \mathbb{R}$, and two vertices $s$ and $t$ in~$G$. \\
\textbf{Goal}: & Find a minimum-weight odd $(s,t)$-path.
\end{tabular}
}}
\end{center}

We solve the \SOP{} problem in polynomial time when the weight function is conservative and the set~$E^-$ of negative edges forms a tree in~$G$. 
We also present two FPT algorithms for solving \SOP{} on graphs with conservative weights. 
In the first FPT algorithm, 
our parameter is~$|E^-|$, the number of negative edges; in fact, the approach we use can be extended to yield an FPT algorithm where the parameter is the size of a maximum matching in~$G[E^-]$, the subgraph of~$G$ spanned by~$E^-$. 
In the second FPT algorithm, the parameter is the treewidth of~$G$.

Our approach relies on solving a variant of the problem that we call \SPaPrOP . 
In order to define this problem, we give the following definition.
Let $F_\even \subseteq E$ and $F_\odd \subseteq E$ be two disjoint edge sets in~$G$.
An odd path $Q$ is \emph{$(F_\even,F_\odd)$-constrained}, if the sequence number of each edge in $Q \cap F_\even$ and in $Q \cap F_\odd$
is even and odd, respectively.
Here, the \emph{sequence number} of an edge~$e$ in~$Q$ is $i$, if $e$ is the $i$-th edge on~$Q$; note that since~$Q$ is odd, $e$ has an odd (even) sequence number independently whether we count the edges of~$Q$ starting from~$s$ or from~$t$.

Using this terminology, we consider the following problem.

\begin{center}
\fbox{ 
\parbox{13.6cm}{
\begin{tabular}{l}\SPaPrOP{}:  \end{tabular} \\
\begin{tabular}{p{1cm}p{11.5cm}}
\textbf{Input}: & An undirected graph $G=(V,E)$, a weight function~$w\colon E \to \mathbb{R}$, two vertices $s$ and $t$ in~$G$, and disjoint edge sets $F_\even\subseteq E$ and $F_\odd\subseteq E$. \\
\textbf{Goal}: & Find a minimum-weight $(F_\even,F_\odd)$-constrained odd $(s,t)$-path.
\end{tabular}
}}
\end{center}

We show that this problem can be solved in polynomial time if all negative-weight edges are contained in $F_\even\cup F_\odd$; 
this result will be useful for us when developing our algorithms for \SOP{}.
\medskip

\noindent
{\bf Organization.}
Basic definitions and notation are introduced in Section~\ref{sec:preliminaries}. Section~\ref{sec:tree-init} presents some initial observations and key definitions that will be useful in our algorithms. In Section~\ref{sec:parity}, we present a polynomial-time algorithm for the \SPaPrOP{}   problem. In Section \ref{sec:tree}, we show how to solve the \SOP{} problem when the weight function is conservative and the negative edges form a tree. Finally, in Section~\ref{sec:FPT}  we develop FPT algorithms for the \SOP{} problem.

\section{Preliminaries \label{sec:preliminaries}} 
\paragraph{Basic notation.} We denote the set of real numbers by $\mathbb{R}$, the rational numbers by $\mathbb{Q}$, and the non-negative real numbers by $\mathbb{R}_{\geq 0}$. 
Given subsets $X,Y\subseteq E$, the \emph{symmetric difference} of $X$ and $Y$ is denoted by $X\triangle Y\coloneqq (X\setminus Y)\cup(Y\setminus X)$. 

\paragraph{Graphs.} 
Let a graph~$G$ be a pair $(V,E)$ where $V$ and $E$ are the sets of vertices and edges, respectively.
For two vertices $u$ and $v$ in~$V$, an edge connecting $u$ and $v$ is denoted by $uv$ or $vu$. 
For $X\subseteq V$ and $F\subseteq E$, the graph obtained by deleting $X$ or $F$ is denoted by $G-X$ or $G-F$, respectively. $G[F]$ denotes the graph \emph{spanned} by $F$, that is, 
the subgraph of~$G$ containing only the edges of~$F$ and all the vertices incident to them.

A \emph{walk}~$W$ in~$G$ is a series $e_1, e_2, \dots, e_\ell$ of edges in~$G$ for which there exist vertices $v_0, v_1, \dots, v_\ell$ in~$G$ such that $e_i= v_{i-1} v_i$ for each $i \in \{1,2,\dots,\ell\}$; note that both vertices and edges may appear repeatedly on a walk, and in this sense we think of the sets of vertices and edges of a walk as multisets in which an element might appear more than once. 
In contrast,  we denote by $V(W)$ and $E(W)$ the sets of vertices and edges \emph{contained by} $W$, respectively, where each item is taken with multiplicity one.
We define the \emph{length} of~$W$ as $|W|=\ell$, and we say that $W$ is \emph{odd} (or \emph{even}) if $\ell$ is odd (or even, respectively). 
The \emph{endpoints} of $W$ are $v_0$ and $v_\ell$, or in other words, it is a  \emph{$(v_0,v_\ell)$-walk}, while all vertices on~$W$ that are not endpoints are \emph{inner vertices}.
If $v_0=v_\ell$, then we say that $W$ is a \emph{closed walk}.

A \emph{path} is a walk on which no vertex appears more than once.
We usually treat a path as a set $\{e_1,e_2,\dots, e_\ell\}$ of edges for which there exist distinct vertices $v_0, v_1, \dots, v_\ell$ in~$G$ such that $e_i= v_{i-1} v_i$ for each $i \in \{1,2,\dots,\ell\}$.
For any $i$ and $j$ with $0 \leq i \leq j \leq \ell$, we write $P[v_i,v_j]$ for the \emph{subpath} of $P$ between~$v_i$ and~$v_j$, consisting of edges $e_{i+1}, \dots, e_j$. 
Note that since we associate no direction with~$P$, we have $P[v_i,v_j]=P[v_j,v_i]$.
For indices $i_1, i_2, \dots, i_r$, we say that the vertices $v_{i_1}, v_{i_2}, \dots, v_{i_r}$ \emph{follow each other in this order} if $i_1 \leq i_2 \leq \dots \leq i_r$ or $i_1 \geq i_2 \geq \dots \geq i_r$.

The \emph{sequence number} of edge $e_i=v_{i-1} v_i \in P$  with respect to~$v_0$, denoted by $\sq(e_i,P,v_0)$, is $i$;
observe that the sequence number of $e_i$ w.r.t.~$v_\ell$ is~$\ell-i+1$.
Note that if~$P$ is an odd path, then the sequence number of an edge~$e$ on~$P$ w.r.t.~$v_0$ has the same parity as the sequence number of~$e$ w.r.t.~$v_\ell$. 
Therefore, in such a case we may talk about the parity of the sequence number of some edge~$e \in P$ without specifying from which endpoint of~$P$ we count the sequence number, and we may also write~$\sq(e,P)$ to denote the sequence number of~$e$ with respect to one (arbitrarily fixed) endpoint of~$P$.

Two paths~$P_1$ and~$P_2$ are \emph{openly disjoint} if they share neither edges nor vertices except for endpoints of both~$P_1$ and~$P_2$.
A \emph{cycle} is a set of edges that can be partitioned into two openly disjoint paths between the same pair of endpoints.

As mentioned before, if we have a non-negative weight function, then we can find a solution for the \SOP{} problem efficiently. Since finding a minimum-weight even path can be reduced to finding a minimum-weight odd path (as explained in Section~\ref{sec:intro}), we have the following.

\begin{thm}[\hspace{-1sp}\cite{Thomassen85, GP81}]\label{thm:SOP_nonnegative_weight}
    Given an undirected graph $G = (V,E)$ with vertices~$s$ and~$t$ and a non-negative weight function $w\colon E \to \mathbb{R}_{\geq 0}$, we can find a minimum-weight odd $(s,t)$-path and a minimum-weight even $(s,t)$-path in strongly polynomial time, whenever such paths exist.
\end{thm}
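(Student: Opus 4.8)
The plan is to use the classical reduction of minimum-weight odd path to \textbf{minimum-weight perfect matching}, which can be solved in strongly polynomial time by Edmonds' blossom algorithm. First, since finding a minimum-weight even $(s,t)$-path reduces to finding a minimum-weight odd $(s,t')$-path after adding a pendant vertex $t'$ joined to $t$ by a weight-$0$ edge (as noted in Section~\ref{sec:intro}), it suffices to treat the odd case; we may also assume $s\neq t$, since otherwise no odd $(s,t)$-path exists and this can be reported directly.

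Next I would build an auxiliary graph $\widehat G$ consisting of two vertex-disjoint copies $G_1$ and $G_2$ of $G$ (vertex $v$ becoming $v_1$ and $v_2$, and each edge $uv$ becoming two edges $u_1v_1$, $u_2v_2$ of weight $w(uv)$), together with a weight-$0$ edge $v_1v_2$ for every $v\in V\setminus\{s,t\}$, and then delete the two vertices $s_2$ and $t_2$. Let $\pi$ denote the map forgetting the copy index. The key claim is that the minimum weight of a perfect matching of $\widehat G$ equals the minimum weight of an odd $(s,t)$-path in $G$, and that from any perfect matching one can extract in linear time an odd $(s,t)$-path of no larger weight; in particular $\widehat G$ has a perfect matching if and only if $G$ has an odd $(s,t)$-path, so infeasibility is detected correctly, and the running time is inherited from the matching algorithm.

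For the easy direction, given an odd $(s,t)$-path $P=v_0v_1\cdots v_{2k+1}$ I would place into a set $M$ the $G_1$-copy of every edge of $P$ with odd sequence number and the $G_2$-copy of every edge of $P$ with even sequence number, and add the cross edge $v_1v_2$ for every $v\notin V(P)$; a short check shows $M$ is a perfect matching of $\widehat G$ of weight $w(P)$. For the reverse direction, given a perfect matching $M$ of $\widehat G$, set $M_1=M\cap E(G_1)$ and $M_2=M\cap E(G_2)$ and consider the multigraph $H$ on vertex set $V$ with edge multiset $\pi(M_1\cup M_2)$. Since $M$ is perfect and the cross edges carry weight $0$, one checks that $\deg_H(v)\in\{0,2\}$ for every $v\notin\{s,t\}$, while $\deg_H(s)=\deg_H(t)=1$ — here we use that $s_2$ and $t_2$ were deleted, which forces $s_1$ and $t_1$ to be matched inside $G_1$. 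Hence $H$ decomposes into a single $(s,t)$-path $P$ together with a vertex-disjoint union of cycles, and $w(M)=w(H)=w(P)+\sum_{C}w(C)\ge w(P)$ because $w\ge 0$.

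The one slightly delicate point — and the part I expect to be the main obstacle — is verifying that the extracted path $P$ is \emph{odd} (that the leftover cycles are even is a by-product that we do not strictly need). This is an alternation argument: at any vertex of degree $2$ in $H$ the two incident edges cannot both lift to edges of $M_1$, nor both to edges of $M_2$, since $M$ is a matching and such a pair would share a copy of that vertex; hence the edges along any path or cycle of $H$ alternate between $M_1$ and $M_2$. Since the unique edge at $s$ and the unique edge at $t$ both lie in $M_1$, the path $P$ must consist of an odd number of edges. Combining the two directions shows that a minimum-weight perfect matching of $\widehat G$ yields, via $H$ and after discarding the cycles, a minimum-weight odd $(s,t)$-path, completing the reduction; the even case then follows from the reduction recalled at the start (or, symmetrically, by deleting $s_2$ and $t_1$ instead of $s_2$ and $t_2$, which makes the resulting path even).
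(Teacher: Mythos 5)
Your proof is correct and is exactly the classical two-copy matching reduction that the paper cites for this theorem and itself generalizes in the proof of Theorem~\ref{thm:parity-constr-oddpath-inP} (two copies of $G$, weight-$0$ edges $vv'$, deletion of $s'$ and $t'$, minimum-weight perfect matching). Your alternation argument for why the extracted path is odd and your handling of leftover cycles and doubled edges via non-negativity match the paper's treatment, so there is nothing to add.
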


A set $T \subseteq E$ of edges in~$G$ is \emph{connected} if for every pair of edges $e$ and $e'$ in~$T$, there is a path contained in~$T$ containing both $e$ and $e'$.
If $T$ is connected and \emph{acyclic}, i.e., contains no cycle, then $T$ is a \emph{tree} in~$G$. 
Given two vertices $a$ and $b$ in a tree~$T$, we denote by $T[a,b]$ the unique path contained in~$T$ whose endpoints are $a$ and $b$.
For an edge $uv \in T$ and a path~$P$ in~$T$ such that $uv \notin P$, either $T[u,p]$ contains~$v$ for every vertex $p$ on~$P$, or $T[v,p]$ contains~$u$ for every vertex $p$ on~$P$.
In the former case, we say that $v$ \emph{is closer to~$P$} in~$T$ than~$u$.

We say that a weight function~$w\colon E \to \mathbb{R}$ 
is \emph{conservative}, if no cycle in~$G$ has negative weight. 
Given a weight function~$w\colon E\rightarrow \mathbb{R}$ on the edges of~$G$, the weight of an edge set~$F \subseteq E$ is $w(F)=\sum_{e \in F}w(f)$; similarly, the weight of a walk~$W=(e_1,\dots, e_\ell)$ is $w(W)=\sum_{i=1}^\ell w(e_i)$.

\paragraph{Derandomization and universal sets.} Given a randomized algorithm, a standard tool for  derandomization is the use of universal sets (see e.g., Cygan et al.~\cite{Cygan2015Parameterized}). An \emph{$(n,k)$-universal set} is a family $\mathcal{U}$ of subsets of $\{1, 2, \dots, n\}$ such that for any subset $S \subseteq \{1, 2, \dots, n\}$ of size $k$, the family $\{ A \cap S: A \in \mathcal{U}\}$ contains all $2^k$ subsets of $S$. 

\begin{thm}[\hspace{-1sp}\cite{Cygan2015Parameterized}]\label{thm:derandom_universal_set}
    For any positive integers $n$ and~$k$ we can construct an $(n,k)$-universal set $\mathcal{U}$ of size $|\mathcal{U}|=2^k\cdot k^{\mathcal{O}(\log k)}\cdot \log n$ in time $\mathcal{O}(n \cdot |\mathcal{U}|)$.
\end{thm}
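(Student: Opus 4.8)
The plan is to reduce to a universe of size polynomial in~$k$ by hashing, and then to handle that case by a recursive construction based on splitters. Both ingredients go back to Naor, Schulman and Srinivasan and are presented in~\cite{Cygan2015Parameterized}; the delicate part is keeping track of the size.

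\emph{A probabilistic benchmark.} First I would record what the probabilistic method gives. Let $\mathcal{U}$ consist of $N$ independent random subsets of $\{1,\dots,n\}$, each obtained by including every element independently with probability~$1/2$. For a fixed $S$ of size~$k$ and a fixed $T\subseteq S$, a single random set $A$ satisfies $A\cap S=T$ with probability $2^{-k}$, so all $N$ sets fail with probability at most $e^{-N2^{-k}}$; a union bound over the at most $\binom{n}{k}2^k$ pairs $(S,T)$ shows that $\mathcal{U}$ is $(n,k)$-universal with positive probability once $N=\Theta(2^k k\log n)$. This is not constructive --- a direct derandomization by conditional expectations would have to track too many bad events --- but it is the target, and the deterministic construction below loses only a $k^{\mathcal{O}(\log k)}$ factor.

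\emph{Shrinking the universe.} Next I would invoke an explicit perfect hash family: a family $\mathcal{H}$ of functions $\{1,\dots,n\}\to\{1,\dots,q\}$ with $q=\mathcal{O}(k^2)$ such that every $k$-element subset of $\{1,\dots,n\}$ is mapped injectively by some $h\in\mathcal{H}$, of size $\mathrm{poly}(k)\cdot\log n$ and constructible in time $\mathrm{poly}(k)\cdot n\log n$. (A random map is injective on a fixed $k$-set with constant probability once $q\ge k^2$, which gives existence; the \emph{explicit} construction is the splitter lemma of~\cite{Cygan2015Parameterized} with the number of parts taken to be at least $k^2$.) Then, for any $(q,k)$-universal set~$\mathcal{B}$, the family $\mathcal{U}=\{\,h^{-1}(B) : h\in\mathcal{H},\ B\in\mathcal{B}\,\}$ is $(n,k)$-universal: given $S$ of size~$k$ and a target $T\subseteq S$, choose $h\in\mathcal{H}$ injective on~$S$ and then $B\in\mathcal{B}$ with $B\cap h(S)=h(T)$, and injectivity of $h$ on~$S$ yields $h^{-1}(B)\cap S=T$. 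Since $|\mathcal{U}|\le|\mathcal{H}|\cdot|\mathcal{B}|=\mathrm{poly}(k)\cdot|\mathcal{B}|\cdot\log n$, it remains to build a $(q,k)$-universal set of size $2^k k^{\mathcal{O}(\log k)}$ over the polynomially large universe~$\{1,\dots,q\}$.

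\emph{The small-universe case and the obstacle.} For the fixed universe $\{1,\dots,q\}$ I would build the universal set by divide-and-conquer on~$k$: using a balanced splitter, partition every $k$-set into a bounded number of pieces of smaller size; recursively build a universal set for the smaller parameter on each piece; and output the family of all unions formed by choosing one set per piece. Correctness of this combine step is immediate from the splitting property --- realize the restriction of the target pattern to each piece separately, then glue. The content, and the main obstacle, is the accounting together with the explicit splitter construction: the number of pieces and the recursion depth must be chosen so that there are only $\mathcal{O}(\log k)$ levels, each contributing a $\mathrm{poly}(k)$ multiplicative overhead (hence $k^{\mathcal{O}(\log k)}$ overall), while the $2^k$ factor is paid exactly once, at the leaves, by enumerating all subsets of the pieces of bounded size --- so that the final size is $2^k k^{\mathcal{O}(\log k)}\log n$ rather than merely $2^{\mathcal{O}(k)}\log n$; and one still needs the \emph{explicit}, error-correcting-code based construction of splitters of size $\mathrm{poly}(k)\log n$ underlying the hashing step. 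I would take the existence and explicit construction of splitters as a black box from~\cite{Cygan2015Parameterized} and concentrate the write-up on the two compositions --- perfect hashing with a small-universe universal set, and a balanced splitter with the recursion --- and on verifying their size and $\mathcal{O}(n\cdot|\mathcal{U}|)$ time guarantees, the latter following by charging the work at each level to the size of the output.
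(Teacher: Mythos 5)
This theorem is imported by the paper from Cygan et al.\ with no proof given, so there is nothing internal to compare against. Your sketch is the standard Naor--Schulman--Srinivasan construction as presented in that reference --- probabilistic benchmark, reduction to a $\mathrm{poly}(k)$-size universe via a perfect hash family, then a splitter-based construction for the small universe with the $2^k$ factor paid once and $k^{\mathcal{O}(\log k)}$ overhead from the combining --- and the two composition arguments you do spell out (hashing with a small-universe universal set, and the glue step for splitters) are correct. Given that the statement is used here purely as a black box, this is an appropriate level of detail.
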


\paragraph{Treewidth.} Intuitively, treewidth measures how similar a graph is to a tree. For an undirected graph $G$, a \emph{tree decomposition} is a pair~$(\mathbb{T},\mathcal B)$ where $\mathbb{T}$ is a tree and $\mathcal B$ is a collection of \emph{bags} in which each node $x$ of $\mathbb{T}$ has an assigned set of vertices $B_x \subseteq V(G)$. The bags have the property that for any edge of $G$, there is a bag in $\mathcal B$ that contains both of its endpoints. 
Furthermore, for each vertex $v$ of $G$, the bags containing $v$ form a subtree of $\mathbb{T}$.
The \emph{width} of $(\mathbb{T},\mathcal B)$ is defined as the largest bag size in $\mathcal B$ minus one.
The \emph{treewidth} of $G$ is the minimum width  over all tree decompositions of~$G$. 

Korhonen~\cite{korhonen2022treewidth} gives a 2-approximation for treewidth in the following result that has a better running time than the known exact algorithms computing treewidth (e.g., Bodlaender~\cite{bodlaender-linear-treewidth} with running time $2^{\mathcal{O}(k^3)}n$).
\begin{thm}[\hspace{-1sp}\cite{korhonen2022treewidth}]\label{thm:construct-treewidth}
     Given a graph $G$ on $n$ vertices and an integer $k$, there is an algorithm that outputs a tree decomposition of~$G$ of width at most $2k + 1$ or determines that the treewidth of~$G$ is larger than~$k$, in $2^{\mathcal{O}(k)} n$ time.
\end{thm}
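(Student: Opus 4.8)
Theorem~\ref{thm:construct-treewidth} is quoted from Korhonen~\cite{korhonen2022treewidth}, so the plan below sketches the strategy one would follow to prove it, along the lines of the earlier single-exponential constant-factor treewidth approximations that Korhonen sharpens. The structural fact driving everything is the \emph{balanced-separator property of bounded treewidth}: if $\mathrm{tw}(G)\le k$, then for every vertex set $S$ there is a separator $X$ with $|X|\le k+1$ such that each connected component of $G-X$ contains at most half of $S$. I would build a tree decomposition by recursing on such separators, and use the converse direction — a graph with no small balanced separator has treewidth exceeding $k$ — to detect when no decomposition of the target width exists.

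The core is a recursive routine $\textsc{Decompose}(H,S)$ taking an induced subgraph $H$ together with a \emph{boundary} $S\subseteq V(H)$ of size $\mathcal{O}(k)$, which returns a tree decomposition of $H$ whose root bag contains $S$ and whose width is suitably bounded, or reports $\mathrm{tw}(G)>k$. It invokes a separator subroutine to find $X\subseteq V(H)$ with $|X|\le k+1$ balanced with respect to $S$; if none exists, it reports $\mathrm{tw}(G)>k$ (extracting, if one wishes, a bramble as a certificate). Then, for each connected component $C$ of $H-X$, it recurses on $H[V(C)\cup X]$ with boundary $(S\cap V(C))\cup X$, and glues the returned decompositions beneath a new root whose bag is $S\cup X$. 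This is a valid tree decomposition: $X$ separates the components, so every edge lies inside some recursive call, and since each recursive call exposes its boundary at its root, the set of bags containing any fixed vertex stays connected. The separator subroutine is where the $2^{\mathcal{O}(k)}$ factor is spent: assuming $\mathrm{tw}(G)\le k$, one first obtains some $\mathcal{O}(k)$-width decomposition of $H$ and runs a dynamic program over it with $2^{\mathcal{O}(k)}$ states per node, each state recording how a hypothetical $(\le k+1)$-separator crosses that bag and how it splits $S$; the cost is $2^{\mathcal{O}(k)}|V(H)|$ per call. Used purely recursively this already gives a $2^{\mathcal{O}(k)}n^{\mathcal{O}(1)}$-time constant-factor approximation; to reach $2^{\mathcal{O}(k)}n$ one instead processes $G$ incrementally, maintaining a bounded-width decomposition and, after inserting each vertex, re-running the repair only on the disturbed part, amortizing the total cost against a potential that grows by $\mathcal{O}(1)$ per inserted vertex.

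The main obstacle is twofold. First, the naive analysis of the recursion only yields width $\mathcal{O}(k)$: the boundary $S$ absorbs the separator $X$ at every level, and the root bag $S\cup X$ can reach size roughly $3k$. Pushing this down to exactly $2k+1$ is Korhonen's real contribution, and it requires choosing, at each step, a separator that is simultaneously small ($\le k+1$), balanced for the current boundary, and (via a more careful flow/LP argument) positioned so that the boundary and the new separator never fully stack up, keeping $|S\cup X|$ below $2k+2$ at all times. Second, making the running time genuinely $2^{\mathcal{O}(k)}n$ rather than $2^{\mathcal{O}(k)}n\log n$ forces the incremental/amortized implementation to be arranged so that each repair touches only $\mathcal{O}(1)$ amortized work per vertex. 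I would expect essentially all of the effort to concentrate in these two places.
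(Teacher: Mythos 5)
This statement is not proved in the paper at all: it is imported verbatim from Korhonen's work and used as a black box (to preprocess the input before applying Theorem~\ref{thm:monadic_treewidth}), so there is no in-paper argument to measure your attempt against. You correctly recognize this, and what you submit is explicitly a plan rather than a proof — you defer the two genuinely hard points (forcing the width down to exactly $2k+1$, and removing the extra polynomial or logarithmic factors to reach $2^{\mathcal{O}(k)}n$) to ``Korhonen's real contribution.'' As a proof attempt it therefore has to be judged incomplete: the top-down balanced-separator recursion you describe is the classical Robertson--Seymour / Bodlaender-et-al.\ template, and as you yourself note it only delivers width roughly $3k$ and, implemented naively, time $2^{\mathcal{O}(k)}n^{\mathcal{O}(1)}$; neither gap is closed by the sketch.

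One substantive correction on the attribution of technique: Korhonen's algorithm is not a top-down recursion on balanced separators of a boundary set. It works by \emph{local improvement} of an existing tree decomposition, in the style of Bellenbaum--Diestel splits: given a decomposition with a bag of size exceeding $2k+2$, a subroutine either finds a separation that strictly improves the decomposition (tracked by a potential function over bag sizes) or certifies $\mathrm{tw}(G)>k$; the $2^{\mathcal{O}(k)}$ factor is spent on dynamic programming over the current decomposition to find such a separation, and the linear total time comes from amortizing the improvements against the potential. Your last paragraph gestures at an ``incremental/amortized repair,'' which is the right spirit, but the driving engine is the improvement-or-refute step on an over-large bag, not a balanced separator of a recursively maintained boundary $S$. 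Since the paper only cites the theorem, none of this affects the paper's correctness; it only means your sketch should not be read as a faithful outline of the cited proof.
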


\paragraph{Monadic second-order logic on graphs.} 
Monadic second-order logic can be used to describe graph properties. A \emph{first-order formula on graphs} is composed of atomic formulas ``$x = y$'' and ``$\texttt{Adj}(x,y)$'' over  variables using logical connectives $\wedge$ (\textsf{and}), $\vee$ (\textsf{or}) and $\neg$ (\textsf{negation}) and logical quantifiers $\forall$ and $\exists$. 
 Every formula defines a graph property by  interpreting variables as vertices, and the relation $\texttt{Adj} (x,y)$  as the vertices corresponding to~$x$ and~$y$ being adjacent. 
To facilitate the reading, we also make use of the usual symbols $\Rightarrow$, $\Leftrightarrow$, and $\nexists$.

A wider class of graph properties can be described by monadic second-order logic.
A \emph{monadic second-order formula} has four types of variables: variables for vertices and edges (denoted by lower-case letters) 
and variables for sets of vertices and sets of edges (denoted by upper-case letters).
The atomic formulas here also contain the $\texttt{Inc}(e,v)$ relation, interpreted as edge~$e$ being incident to vertex~$v$, 
and the relation of inclusion, denoted as ``$x \in X$''.  For example, the graph property of being bipartite is expressed by the formula \texttt{Bipartite} defined as
\begin{align*}
    \texttt{Bipartite} \equiv \exists V_1, V_2: {} & \bigl(\forall v:(v \in V_1 \vee v \in V_2 ) \wedge \neg (v \in V_1 \wedge v \in V_2)\bigr)   \\
    &{} \wedge \Bigl(\forall v_1,v_2:  \texttt{Adj}(v_1,v_2) \Rightarrow \bigl( (v_1 \in V_1 \wedge v_2 \in V_2) \vee (v_1 \in V_2 \wedge v_2 \in V_1) \bigr) \Bigr), 
\end{align*}
where $v$, $v_1$, and $v_2$ are vertex variables, while $V_1$ and $V_2$ are vertex set variables. 
We usually denote a monadic second-order formula by~$\varphi$.
We follow the notation by Satzinger~\cite{Satzinger2014Diploma}, using abbreviations such as $\exists v_1,v_2 : \varphi(v_1,v_2)$ instead of $\exists v_1 \exists v_2 \varphi(v_1,v_2)$, 
and $\nexists v \in V: \varphi(v)$ instead of 
$\nexists v(v \in V \wedge \varphi(v))$.
Although each variable in $\varphi$ has an associated type, we will omit to provide these explicitly whenever they are clear from the context. 

Note that whenever we need to decide whether a given graph~$G$ satisfies $\varphi$, we also need to provide an interpretation of every free variable in~$\varphi$; this way we can use certain vertices or edges in~$G$ as constants.
For instance, the following formula expresses whether a fixed vertex~$s$ is contained in a triangle of~$G$:
\[
\texttt{Triangle}_s \equiv
\exists v_1, v_2: \texttt{Adj}(s,v_1) \wedge 
\texttt{Adj}(s,v_2) \wedge
\texttt{Adj}(v_1,v_2)
\]
For further definitions and examples, see Lovász~\cite{lovasz2012large} and Cygan et al.~\cite {Cygan2015Parameterized}.

Courcelle proved in~\cite{courcelle1990monadic} that every graph property definable in monadic second-order logic can be decided in linear time on graphs of bounded treewidth. 
This metatheorem is not enough to solve the \SOP{} problem on graphs of bounded treewidth, but the extension of monadic second-order logic developed by Arnborg, Lagergren and Seese~\cite{ALS91} enables us to deal with edge weights.
Arnborg et al. defined \emph{linear EMS extremum problems}, which include problems where the input graph~$G=(V,E)$ is equipped with an evaluation function~$f$, and the task is to find the maximum (or the minimum) of the evaluation term $f(P)\coloneqq \sum_{e \in P} f(e)$ over all  sets~$P \subseteq E$ such that $G$ satisfies a given monadic-second order formula~$\varphi(P)$ with~$P$ as a free edge set variable in~$\varphi$.\footnote{We remark that such problems constitute only a small subset of linear EMS extremum problems; for the precise definition, see the paper by Arnborg et al.~\cite{ALS91} or the introductory description by Satzinger~\cite[Section 3.3]{Satzinger2014Diploma}.}

\begin{thm}[\hspace{-1sp}\cite{ALS91}] \label{thm:monadic_treewidth}   
    For each integer~$k$, every linear EMS extremum problem can be solved in linear time, assuming that a tree-decomposition of width at most~$k$ is provided for the input graph.
\end{thm}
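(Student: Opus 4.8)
The plan is to prove \Cref{thm:monadic_treewidth} by the standard automata-theoretic dynamic programming over a tree decomposition, following the template of Courcelle's theorem and its weighted (linear EMS) extension. First I would preprocess the given width-$k$ tree decomposition $(\mathbb{T},\mathcal{B})$ into a \emph{nice} tree decomposition in linear time and without increasing the width: a rooted tree in which every node is of one of a constant number of types (leaf, introduce-vertex, introduce-edge, forget, join). This exhibits $G$ as the result of a bounded-arity sequence of graph-construction operations on $k$-boundaried graphs, the boundary of the graph built below a node being its bag, so that the whole construction is described by a labelled tree whose node labels come from a finite alphabet depending only on~$k$.

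Next I would invoke the equivalence, for the fixed monadic second-order formula $\varphi(P)$ with one free edge-set variable~$P$, between MSO-definability and recognizability by a finite bottom-up tree automaton $\mathcal{A}$ running on such parse trees in which each edge-creating operation is additionally decorated by a bit recording whether the newly created edge belongs to~$P$. Concretely this rests on a Feferman--Vaught / Shelah-style composition theorem: the MSO theory, up to quantifier rank equal to that of~$\varphi$, of a boundaried graph obtained by glueing two boundaried graphs is determined by the theories of the parts together with the glueing pattern. Hence there is a finite set of ``types'' $\Sigma$, computable from $\varphi$ and~$k$, together with a transition function telling how the type at a node is obtained from the types at its children and the node's label, and a designated set of accepting types corresponding exactly to $G \models \varphi(P)$ for the chosen assignment to~$P$.

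Given $\mathcal{A}$, I would set up the dynamic program over the nice tree decomposition. For each node~$x$ and each type $q \in \Sigma$, I maintain the value $\mathrm{opt}(x,q)$, defined as the minimum (in the maximization variant, the maximum) of $\sum_{e \in P'} f(e)$ over all $P' \subseteq E_x$, where $E_x$ is the set of edges introduced strictly below~$x$, such that the run of $\mathcal{A}$ on the parse subtree rooted at~$x$ with the $P$-bits set according to~$P'$ reaches state~$q$; set $\mathrm{opt}(x,q) = +\infty$ (resp.\ $-\infty$) if no such $P'$ exists. The recurrences follow immediately from the node types: a leaf initializes the base types; an introduce-edge node either places the new edge into~$P'$, adding its $f$-value, or not, updating the type via the transition function in each case and taking the optimum over the two options and over compatible child types; forget and introduce-vertex nodes merely relabel the type via the transition function; a join node takes, for each target type reachable from a pair of child types, the sum of the two children's values, optimizing over all such pairs. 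Each table has constant size and each node is processed in constant time, so the total running time is linear in $|\mathbb{T}| = \mathcal{O}(n)$; the answer is $\min\{\mathrm{opt}(r,q) : q \text{ accepting}\}$ at the root~$r$ (and a witnessing $P$ is recovered by storing back-pointers). The main technical obstacle is the composition theorem underpinning the construction of~$\mathcal{A}$ — showing that finitely many types suffice and that they compose through the glueing operations — which is exactly the heart of the Arnborg--Lagergren--Seese argument; once that black box is in place, the weighted part is a routine min-plus evaluation over the automaton's run tree.
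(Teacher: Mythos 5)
This theorem is not proved in the paper at all: it is imported verbatim from Arnborg, Lagergren and Seese~\cite{ALS91} (the paper even adds a footnote pointing the reader to that source and to Satzinger for the precise definition of linear EMS extremum problems). Your sketch is essentially the standard Courcelle-style argument that underlies the cited result --- nice tree decomposition as a parse tree of boundaried graphs, MSO-to-tree-automaton translation via a Feferman--Vaught composition theorem, and a min-plus dynamic program over automaton states --- so there is no divergence from the paper to report, only from the cited reference, which your outline follows faithfully. Two small caveats if you wanted to turn this into a complete proof: first, the class of linear EMS extremum problems in~\cite{ALS91} is broader than the single-free-edge-set-variable, single-evaluation-function special case you treat (it allows several free set variables and an affine combination of several evaluation terms), though the argument extends verbatim by enlarging the alphabet with one bit per free variable and carrying a vector of partial sums; second, the correctness of summing children's values at a join node depends on the edge sets introduced in the two subtrees being disjoint, which holds precisely because nice decompositions introduce each edge exactly once --- worth stating explicitly. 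The genuine mathematical content, as you correctly identify, is the finiteness and compositionality of the MSO types, which you leave as a black box; everything else is routine bookkeeping.
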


\section{Initial observations}
\label{sec:tree-init}
Let $G=(V,E)$ be an undirected graph with a conservative weight function~$w\colon E \rightarrow \mathbb{R}$. Let $E^-=\{e \in E:w(e)<0\}$ denote the set of negative edges, and $\T$ be the set of negative trees they form. More precisely, let~$\T$ be the set of connected components in the subgraph of~$G$ spanned by~$E^-$, i.e., $\T$ contains those maximal, non-empty subsets of~$E^-$ that are connected in~$G$; the acyclicity of each $T \in \T$ follows from the conservativeness of~$w$.

\begin{lem}
\label{lem:closed-walk}
If $W$ is a closed walk that does not contain any edge  
of $E^-$ more than once, then $w(W) \geq w(E(W)) \geq 0$.
\end{lem}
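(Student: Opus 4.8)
The plan is to prove the two inequalities separately, keeping careful track of how many times each edge is traversed. For each $e\in E(W)$ let $m_e\ge 1$ be the number of times $W$ uses $e$; by hypothesis $m_e=1$ whenever $e\in E^-$, and for every $e$ with $m_e>1$ we have $e\notin E^-$, hence $w(e)\ge 0$. The first inequality is then immediate: since $w(W)=\sum_{e\in E(W)}m_e\,w(e)$, we get $w(W)-w(E(W))=\sum_{e\in E(W)}(m_e-1)w(e)$, and every summand is non-negative (the terms with $m_e=1$ vanish, and the terms with $m_e>1$ are a product of two non-negative numbers).

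For the inequality $w(E(W))\ge 0$ the idea is to split $E(W)$ according to the parity of $m_e$ rather than according to whether $m_e\ge 2$. Let $J\coloneqq\{e\in E(W):m_e\text{ is even}\}$. Since $m_e=1$ for every negative edge, $J$ contains no edge of $E^-$, so $w(J)\ge 0$. It remains to show $w(E(W)\setminus J)\ge 0$, and for this I would argue that the subgraph of $G$ spanned by $E(W)\setminus J$ has all degrees even: for any vertex $v$, the number of edges of $E(W)\setminus J$ incident to $v$ (that is, edges $e\ni v$ with $m_e$ odd) has the same parity as $\sum_{e\ni v}m_e$, which is the total degree of $v$ in the multigraph traced out by $W$ and is therefore even, because $W$ is a closed walk (a loop at $v$, if the graph has loops, contributes an even amount and may be ignored in this parity count). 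An even subgraph decomposes into edge-disjoint cycles, each of which has non-negative weight since $w$ is conservative, so $w(E(W)\setminus J)\ge 0$; combining, $w(E(W))=w(J)+w(E(W)\setminus J)\ge 0$.

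I do not expect a real obstacle here; the one idea that makes it work is noticing that the \emph{parity} of $m_e$ is the right invariant, so that the odd-multiplicity edges automatically form an even subgraph (hence a disjoint union of genuine cycles of $G$, to which conservativeness applies termwise) while the even-multiplicity edges are automatically non-negative. The only points to state carefully are the standard fact that an even subgraph is a disjoint union of cycles and, if $G$ is not assumed to be simple, the convention that loops contribute an even amount to the degree count above.
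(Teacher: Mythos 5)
Your proof is correct and follows essentially the same route as the paper: your set $E(W)\setminus J$ of odd-multiplicity edges is exactly the paper's set $Z$, which forms an even subgraph, hence a union of edge-disjoint cycles of non-negative weight by conservativeness, while the even-multiplicity edges (and all repetitions) are non-negative because any edge used more than once lies outside $E^-$. The only difference is presentational — you phrase the comparison $w(Z)\le w(E(W))$ as the additive statement $w(J)\ge 0$.
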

\begin{proof}
Let $Z$ be the set of edges appearing an odd number of times on~$W$. Since any edge used at least twice in~$W$ has non-negative weight, we have $w(Z) \leq w(E(W))\leq w(W)$. Moreover, since $W$ is a closed walk, each vertex $v \in V$ has an even degree in the graph $(V,Z)$. 
Therefore, $(V,Z)$ is the union of edge-disjoint cycles, and the conservativeness of~$w$ implies $w(Z) \geq 0$.
\end{proof}

The following definition plays a crucial role in our algorithms.

\begin{defn}
    For a tree~$T \in \T$, a $T$-\emph{leap} $L$ with endpoints $a$ and $b$ is a path such that  $V(L) \cap V(T) = \{ a,b \}$ and $T \cap L=\emptyset$. A \emph{leap} is simply a $T$-leap for some~$T \in \T$.
    For a path $Q$, a \emph{leap on~$Q$} is a subpath of~$Q$ that is a leap.
    Furthermore, given a $T$-leap~$L$ with endpoints $a$ and $b$, we refer to the cycle $C = L \cup T[a,b]$ as the \emph{cycle induced by~$L$}. 
    We say that $L$ is \emph{parity-changing} if $|C|$ is odd. Finally, the \emph{shadow} of a $T$-leap $L$ on $Q$ with endpoints $a$ and $b$ is the edge set $Q\cap T[a,b]$, see Figure~\ref{fig:leap} for an illustration.
\end{defn}

\begin{figure}
    \centering
    \begin{tikzpicture}[scale = .7, nodeStyle/.style={circle,fill,minimum size=1pt}]
        \node (x) at (2,4) [circle, fill,label=left:\large{$x$}] {};
        \node (y) at (2,0) [nodeStyle,label=above:\large{$y$}] {};
        \node (a) at (3,1) [nodeStyle, label=below:\large{$a$}] {};
        \node (b) at (13,1) [nodeStyle, label=below:\large{$b$}] {};
    	
        \draw (a) edge[decorate,decoration=snake,line width=2.5pt,bend right=40,MyOrange] node[above]{ \large{$L$}}  (b);
        \draw[line width=1pt] (1,5) -- (4,2) -- (12,2) -- (15,5);
        \draw[line width=1pt] (1,-1) -- (4,2);
        \draw[line width=1pt] (12,2) -- (15,-1);
        \draw[line width=1pt, preaction={ draw,greenish!50,-, double=greenish!50, double distance=4\pgflinewidth}] (6,2) to (10,2);
        \draw[line width=1pt, decorate,decoration=snake] (x) ..controls (5,3).. (6,2) -- (10,2)..controls (14,4).. (b);
        \draw[line width=1pt, decorate,decoration=snake] (a) -- (y);
    \end{tikzpicture}  
    \caption{Illustration of a leap. Tree $T$ is represented by straight lines, while the $(x,y)$-path $Q$ is shown by curvy lines. The thick orange part of $Q$ forms a $T$-leap denoted by $L$. The shadow of $L$ is highlighted with green.}
    \label{fig:leap}
\end{figure}
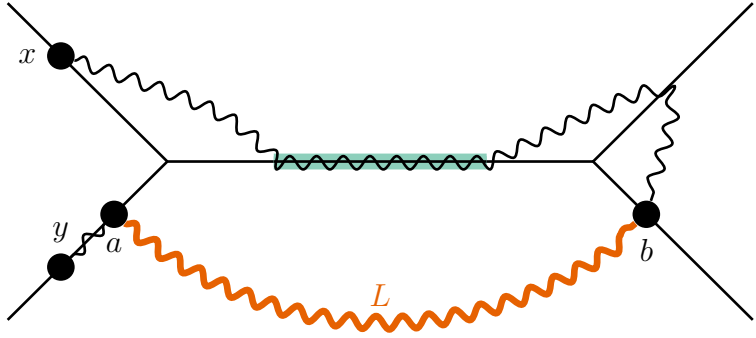

\begin{rem}\label{rem:min_parity-changing_leap}[Computing parity-changing leaps.]
    Suppose $\T$ contains a unique tree $T$. Then, given two vertices~$a$ and $b$ on~$T$, it is not difficult to obtain a parity-changing leap with endpoints~$a$ and~$b$ having minimum-weight: 
    we simply have to search for a minimum-weight $(a,b)$-path in $G-T$ whose parity differs from the parity of $T[a,b]$. 
    As $w$ restricted to $G-T$ is non-negative, we can apply Theorem~\ref{thm:SOP_nonnegative_weight} to find such a path.
\end{rem}

Now we describe a procedure that  redistributes the weights on the edges of~$G$ in such a way that the weight of each edge in the shadow of some leap $L$ becomes zero; this procedure will be useful for proving certain properties of the weight function~$w$ (namely, for proving Lemma~\ref{lem:T-min-pathlength}). 
See Algorithm~\ref{alg:initial-procedure} for a pseudocode. 
The input of the procedure is a set~$Q \subseteq E$ that can be partitioned into a set $\mathcal{P}$ of mutually vertex-disjoint paths, each having both endpoints on~$T$, 
and its output is a new weight function~$w_Q$ on $E$ fulfilling $w(Q)=w_Q(Q)$. 
Since $w$ is conservative, $w_Q(L) \geq 0$ remains true after performing lines~\ref{line:new_weight_forleap}--\ref{line:new_weight2} for some leap~$L$ on~$Q$.
After executing lines~\ref{line:new_weight_forleap}--\ref{line:new_weight2} for each $T$-leap on~$Q$, 
the resulting weight function~$w_Q$ fulfills the following: 
\begin{itemize}
    \item[(i)]  $w_Q(Q) = w(Q)$, so the total weight of~$Q$ remains unchanged,
    \item[(ii)] $w_Q(f)=0$ for each edge~$f$ in the shadow of some leap on~$Q$,
    and 
    \item[(iii)] $w_Q(L) \geq 0$ for each $T$-leap~$L$ on~$Q$.
\end{itemize}
In fact, Line~\ref{line:new_weight2} of Algorithm~\ref{alg:initial-procedure} can be defined differently, as long as the total weight on~$L$ decreases by $|w(f)|$.

\begin{algorithm}[H]
	\caption{Redistributing the weights along $Q$.}
	\label{alg:initial-procedure}
	\begin{algorithmic}[1]
        \Require{Graph $G$, set~$Q \subseteq E$ that can be partitioned into a set $\mathcal{P}$ of mutually vertex-disjoint 
        paths each having both endpoints on~$T$, weight function $w\colon E \to \mathbb{R}$.}
        \Ensure{A weight function~$w_Q\colon E \to \mathbb{R}$ such that $w(Q)=w_Q(Q)$.}
		\State Set $w_Q \equiv w$.
		\For{each $T$-leap $L$ on $Q$} 
		\For{ each edge $f$ in the shadow of $L$ for which $w_Q(f)=w(f)$} \label{line:new_weight_forleap}
		\State $w_Q(f)\coloneqq 0$. \label{line:new_weight1}
        \For{ each edge $e$ on the $T$-leap $L$}		
            \State $w_Q(e)\coloneqq w_Q(e)+w(f)/|L|$. 
            \label{line:new_weight2}
        \EndFor
        \EndFor
		\EndFor
        \State {\bf return} $w_Q$
	\end{algorithmic}
\end{algorithm}

The described procedure is essential for the following lemma.

\begin{lem}
\label{lem:T-min-pathlength}
Let $x,y,x',y'$ be four distinct vertices on a tree~$T$ in~$\T$. \begin{itemize}
    \item[(1)] If $Q$ is an $(x,y)$-walk in~$G$ that does not use any edge with negative weight more than once, then $w(Q) \geq w(T[x,y])$.
    \item[(2)] If $Q$ is an $(x,y)$-path, $Q'$ is an $(x',y')$-path, and $Q$ and $Q'$ are vertex-disjoint,  then \\
    $w(Q)+w(Q')\geq w(T[x,y] \triangle T[x',y'])$. 
\end{itemize} 
\end{lem}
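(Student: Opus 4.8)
I would prove part (1) first, since part (2) will reduce to it via the weight‑redistribution procedure of Algorithm~\ref{alg:initial-procedure}. For part (1), let $Q$ be an $(x,y)$‑walk not repeating any negative edge. Consider the closed walk $W$ obtained by concatenating $Q$ with the tree path $T[x,y]$ (traversed in reverse). The only edges that could appear more than once in $W$ and be negative are edges of $T[x,y]$ that also lie on $Q$ — but these are edges of $E^-$, and if such an edge appeared twice on $W$ it would have to appear at least once on $Q$ and once on $T[x,y]$; that is allowed as long as it does not appear \emph{twice on $Q$}. So $W$ may repeat negative edges. To fix this, I would instead argue directly: $w(Q) - w(T[x,y]) = w(W')$ where we think of $W'$ as $Q$ together with $-w$ on $T[x,y]$; the edges of $E^-\cap T[x,y]$ cancel in sign with their occurrences on $Q$. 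More carefully, write $w(Q) = w(Q\setminus T[x,y]) + w(Q\cap T[x,y])$ and $w(T[x,y]) = w(T[x,y]\setminus Q) + w(Q\cap T[x,y])$, so $w(Q) - w(T[x,y]) = w(Q\setminus T[x,y]) - w(T[x,y]\setminus Q)$. Now form the closed walk $W$ traversing $Q$ and then $T[y,x]$; its edge multiset, after removing pairs of identical edges, is supported on $(Q\setminus T[x,y]) \cup (T[x,y]\setminus Q)$ together with edges of $Q$ traversed twice — and every edge traversed twice on $Q$ is non‑negative by hypothesis, while edges of $T[x,y]\setminus Q$ are traversed once. I would then apply Lemma~\ref{lem:closed-walk}, but that lemma forbids repeating \emph{negative} edges of $W$, and no negative edge of $T[x,y]$ is repeated (each is used once on $T$ and, if on $Q$, the cancellation removes both). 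So $0 \le w(W) = w(Q) + w(T[x,y]) - 2w(Q\cap T[x,y])$; combined with the two displayed identities this rearranges to $w(Q) \ge w(T[x,y])$.

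For part (2), the idea is to apply the redistribution of Algorithm~\ref{alg:initial-procedure} to $Q\cup Q'$, which by hypothesis is partitioned into the two vertex‑disjoint paths $Q$ (endpoints $x,y$ on $T$) and $Q'$ (endpoints $x',y'$ on $T$). Let $w_{Q}$ denote the output weight function. By property (i), $w(Q)+w(Q') = w_Q(Q)+w_Q(Q')$. By property (ii), every edge of $T[x,y]\cap Q$ and of $T[x',y']\cap Q'$ — i.e., every edge in the shadow of a leap on $Q\cup Q'$ — has $w_Q$‑weight zero. The point of this is that after redistribution we may use part (1) with $w_Q$ in place of $w$: since $w_Q$ is still conservative (redistribution along a leap preserves the weight of every cycle — this needs checking), part (1) gives $w_Q(Q) \ge w_Q(T[x,y])$ and $w_Q(Q') \ge w_Q(T[x',y'])$. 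Adding, $w(Q)+w(Q') \ge w_Q(T[x,y]) + w_Q(T[x',y'])$. The final step is to show $w_Q(T[x,y]) + w_Q(T[x',y']) \ge w(T[x,y]\triangle T[x',y'])$. Here I would use that on $T[x,y]\triangle T[x',y']$ the original weight $w$ agrees with $w_Q$ — no, rather, that the edges in $T[x,y]\cap T[x',y']$ (the part removed by the symmetric difference) have had their weight possibly lowered to $0$, and these are exactly edges that could be in a shadow; meanwhile the edges surviving in the symmetric difference retain enough weight. This last bookkeeping is the step I'd be most careful with.

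\medskip

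\noindent\textbf{Main obstacle.} The delicate point is the interaction between the two paths $Q$ and $Q'$ and the tree paths $T[x,y]$, $T[x',y']$ inside the redistribution: I need that redistributing weights along leaps of $Q\cup Q'$ keeps the relevant edges of $T[x,y]\triangle T[x',y']$ at their original weight (or at least does not decrease the total), and that $w_Q$ remains conservative so part (1) applies. A clean way to handle the case analysis is to split according to whether the tree paths $T[x,y]$ and $T[x',y']$ overlap: if they are disjoint, part (2) follows by applying part (1) to $Q$ and $Q'$ separately and noting $T[x,y]\triangle T[x',y'] = T[x,y]\cup T[x',y']$; if they overlap, the overlap $T[x,y]\cap T[x',y']$ is itself a subpath of $T$, and one must argue that the contributions of that overlap to the two lower bounds $w_Q(T[x,y])$ and $w_Q(T[x',y'])$ together dominate $w(T[x,y]\triangle T[x',y'])$ even though the overlap is subtracted off — this is where property (ii), i.e.\ that shadow edges have been zeroed out, does the real work, since the overlap edges that lie on $Q$ or $Q'$ are shadow edges and hence contribute $0$ anyway, while overlap edges on neither path are non‑problematic.
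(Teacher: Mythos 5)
Your part~(1) takes a genuinely different and more elementary route than the paper: instead of invoking the weight redistribution of Algorithm~\ref{alg:initial-procedure}, you close up $Q$ with $T[y,x]$ and use the fact that the odd-multiplicity edge set of a closed walk is an even-degree subgraph, hence a union of cycles of non-negative weight (this is the internal argument of Lemma~\ref{lem:closed-walk}; the lemma's no-repeated-negative-edge hypothesis is not needed for this part of its conclusion). This works, but two points are glossed over. First, the inequality $0\le w(Q)+w(T[x,y])-2w(Q\cap T[x,y])$ does \emph{not} rearrange to $w(Q)\ge w(T[x,y])$ by itself: combining it with your first identity gives only $w(Q)-w(T[x,y]) = w(Q\setminus T[x,y]) - w(T[x,y]\setminus Q) \ge -2\,w(T[x,y]\setminus Q)$, and you must add the observation that $w(T[x,y]\setminus Q)\le 0$ because every edge of $T$ is negative. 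Second, for a genuine walk the multiplicity bookkeeping is delicate; it is cleaner to first delete cycles from $Q$ (legitimate by conservativeness) and argue on a path, as the paper does. With these two additions your part~(1) is correct and arguably shorter than the paper's.

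Part~(2) has a genuine gap. Your plan is to apply part~(1) with the redistributed weight function $w_{Q\cup Q'}$ in place of $w$, which requires that function to be conservative. It is not, in general: the redistribution subtracts $|w(f)|/|L|$ from every edge of a leap $L$, which can drive an individual leap edge negative (take a shadow edge $f$ with $w(f)=-1$ and $|L|=2$ with edge weights $0$ and $2$; the first leap edge ends up with weight $-1/2$), and any cycle through that edge avoiding the zeroed shadow edge then has negative weight under the new function. (Even if conservativeness held, part~(1) is a statement about a tree of $\T$, i.e.\ a component of the negative edges of the weight function in force, and that structure is destroyed by the redistribution.) The paper never re-invokes part~(1); it bounds $w_{Q\cup Q'}(Q\cup Q')$ directly using only properties (i)--(iii) of the redistribution, in particular $w_{Q\cup Q'}(L)\ge 0$ for each leap, which \emph{does} follow from conservativeness of the original $w$. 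The combinatorial heart of that argument, which your last paragraph gestures at but does not establish, is that every edge $f\in T[x,y]\cap T[x',y']$ lying on, say, $Q$ is in the shadow of a leap of $Q'$: since $Q'$ avoids both endpoints of $f$ (by vertex-disjointness) and $f$ separates $x'$ from $y'$ in $T$, the path $Q'$ must cross over $f$ via a $T$-leap whose shadow contains $f$. With that fact the sum over $Q\cup Q'$ splits into the symmetric-difference part, zeroed shadow edges, and non-negative leaps, and no conservativeness of the new weights is needed. As written, your part~(2) does not constitute a proof.
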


\begin{proof}
To show statement~(1) of the lemma, let $Q$ be an $(x,y)$-walk in~$G$. Observe that we can assume w.l.o.g.\ that $Q$ is an $(x,y)$-path: if $Q$ contains cycles, then we can repeatedly delete any cycle from~$Q$, possibly of length~2, so that in the end we obtain an $(x,y)$-path whose weight is at most the weight of~$Q$, since $w$ is conservative.
Observe that any edge of~$Q \cap T$ that is not in the shadow of any $T$-leap on~$Q$ must lie on~$T[x,y]$: 
indeed, for any edge $uv \in Q \cap T$ that lies outside~$T[x,y]$, with $v$ being closer to $T[x,y]$ than~$u$ in~$T$, 
either $Q[u,x]$ or $Q[u,y]$ needs to use  a $T$-leap that contains~$uv$ in its shadow. 
Let $w_Q$ be the weight function obtained by Algorithm~\ref{alg:initial-procedure}. We get
\begin{align*}
    w(Q) = w_Q(Q) & = \sum_{f \in Q \cap T[x,y]} w_Q(f) +
    \sum_{f \in Q \cap T \setminus T[x,y]} w_Q(f) +
    \sum_{f \in Q \setminus T} w_Q(f) \\[2pt]
    & \geq w(Q \cap T[x,y])+0+\sum_{\substack{\text{$L$: $L$ is a} \\ \text{$T$-leap on~$Q$}}} w_Q(L) \\ & \geq  w(T[x,y]).
\end{align*}

To prove statement~(2), assume that $Q$ and $Q'$ are as in the statement.
Observe that any edge $f$ in~$(Q \cup Q') \cap T$ that is not contained in~$T[x,y] \triangle T[x',y']$ must be in the shadow of some $T$-leap in~$Q \cup Q'$. Indeed, if $f$ is not in $T[x,y] \cup T[x',y']$, then this follows by the same arguments we used for statement~(1). If $f \in T[x,y] \cap T[x',y']$, then 
one of the following cases holds because $f \in (Q \cup Q') \cap T$ by definition and  $Q$ and $Q'$ are vertex-disjoint:
either $f \in Q$ in which case it must be in the shadow of a $T$-leap on~$Q'$, or $f \in Q'$ in which case it is in the shadow of a $T$-leap on~$Q$. 
Considering the function~$w_{Q \cup Q'}$ obtained by the procedure in Algorithm~\ref{alg:initial-procedure}, thus we get
\begin{align*}
w(Q \cup Q') \, & = \,w_{Q \cup Q'}(Q \cup Q') \\
    &= \sum_{\substack{f \in Q \cup Q' \text{ and} \\[2pt] f \in  T[x,y] \triangle T[x',y']}} w_Q(f) +
    \sum_{\substack{f \in (Q \cup Q') \cap T\text{ and} \\[2pt] f \notin T[x,y] \triangle T[x',y']}} w_Q(f) +
    \sum_{f \in Q \cup Q' \setminus T} w_Q(f) \\[2pt]
&\geq w \Big( (Q \cup Q') \cap (T[x,y] \triangle T[x',y']) \Big) +0+\sum_{\substack{\text{$L$: $L$ is a } \\[1pt] \text{$T$-leap in~$Q \cup Q'$ }}} w_{Q \cup Q'}(L) \\
\, &\geq \,   w(T[x,y] \triangle T[x',y']). \qedhere
\end{align*}
\end{proof}

\begin{lem}
\label{lem:T-1parity-changing-leap}
Let $Q$ be a minimum-weight odd $(s,t)$-path in~$G$ that contains as few leaps as possible (among all minimum-weight odd $(s,t)$-paths). 
Then either $Q$ contains no leaps, or it contains at least one parity-changing leap.
\end{lem}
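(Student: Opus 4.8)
The plan is to argue by contradiction: suppose $Q$ is a minimum-weight odd $(s,t)$-path with the fewest leaps, that $Q$ contains at least one leap, but that every leap on $Q$ is \emph{not} parity-changing. I will derive a contradiction by producing another minimum-weight odd $(s,t)$-path with strictly fewer leaps. Fix a tree $T \in \mathcal{T}$ and a $T$-leap $L$ on $Q$ with endpoints $a$ and $b$; since $Q \cap T[a,b]$ is the shadow of $L$ and $Q$ is a path, $Q$ actually traverses $T[a,b]$ in pieces, but the key object is the closed walk obtained by concatenating $L$ with $T[a,b]$, which is exactly the cycle $C$ induced by $L$. Since $L$ is not parity-changing, $|C|$ is even, so $L$ and $T[a,b]$ have the same parity.

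First I would show that we may assume $Q$ traverses only the edges of one negative tree $T$ in the portion delimited by the ``outermost'' leap, and reduce to the case of a single leap by choosing $L$ appropriately — e.g. choosing, among all leaps on $Q$, one whose shadow $Q \cap T[a,b]$ is inclusion-minimal, or choosing $L$ so that no other leap of $Q$ has both endpoints strictly inside $T[a,b]$. Then consider the operation of rerouting $Q$: replace the subpath of $Q$ between $a$ and $b$ (which consists of $L$ together with the shadow edges lying on $T[a,b]$, possibly interleaved with further excursions) by the path $T[a,b]$ itself. Because $L$ has the same parity as $T[a,b]$, this swap preserves the parity of the whole $(s,t)$-path; we need to check that the result is still a \emph{path} (the shadow edges are already on $Q$, and $L$'s inner vertices get removed, so the only possible conflict is between $T[a,b]$ and the rest of $Q$ — this is where the minimality of the shadow / the choice of the outermost leap is used). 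For the weight, the subpath being removed has weight at least $w(T[x,y])$-type bounds, and more directly: $w(L) \ge 0$ since $w$ is conservative and $|C|$ even means $w(C) = w(L) + w(T[a,b]) \ge 0$, hence $w(L) \ge -w(T[a,b]) = w(T[a,b] \text{ reversed-sign})$; combined with the fact that the shadow edges are double-counted or can be redistributed via Algorithm~\ref{alg:initial-procedure} as in Lemma~\ref{lem:T-min-pathlength}, the rerouted path has weight no larger than $w(Q)$. Since $Q$ is minimum-weight, the new path is also minimum-weight, but it uses $T[a,b] \subseteq T$, so the leap $L$ has been eliminated and — again by the choice of $L$ — no new leap is created, giving fewer leaps overall, a contradiction.

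The main obstacle I expect is the bookkeeping needed to guarantee that after rerouting we genuinely obtain a simple path with strictly fewer leaps, rather than just a closed walk or a path with the same leap count. The subtlety is that $Q$ restricted to the $a$–$b$ portion need not be just ``$L$ followed by shadow edges'': $Q$ may weave in and out of $T[a,b]$ multiple times, and replacing that whole stretch by $T[a,b]$ could collide with edges or vertices of $Q$ lying outside $[a,b]$, or could itself reintroduce a leap if some inner vertex of $T[a,b]$ was visited by $Q$ elsewhere. Handling this cleanly probably requires either (a) picking $L$ as an inclusion-wise innermost/outermost leap so that the interfering structure is controlled, or (b) an exchange argument on the symmetric difference $E(Q) \triangle C$, showing it decomposes into an odd $(s,t)$-path plus cycles, each of non-negative weight by Lemma~\ref{lem:closed-walk}, so that some component is a minimum-weight odd $(s,t)$-path with fewer leaps. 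I would lean on route (b), using Lemma~\ref{lem:closed-walk} and the conservativeness of $w$ to control the weight, and a careful parity/leap count on the resulting path to finish.
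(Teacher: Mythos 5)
There is a genuine gap. Your plan correctly identifies the overall strategy (replace leaps by tree paths, use the non-parity-changing hypothesis to preserve parity, bound the weight via conservativeness or Lemma~\ref{lem:T-min-pathlength}, and conclude with a leap-count contradiction), but neither of the two concrete routes you offer actually closes the argument, and you explicitly leave the central difficulty unresolved. Route (a) starts from a misreading: the subpath of $Q$ between $a$ and $b$ is exactly the leap $L$ itself (the shadow $Q\cap T[a,b]$ lies on $Q$ \emph{outside} $Q[a,b]$, since $L\cap T=\emptyset$), so ``replacing $Q[a,b]$ by $T[a,b]$'' replaces $L$ by a tree path that may collide with the shadow edges and, worse, with vertices of $T[a,b]$ that $Q$ visits elsewhere without using tree edges; choosing $L$ with inclusion-minimal shadow does not rule out these vertex collisions, nor does it control whether the leap count actually drops. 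Route (b) also fails as stated: $E(Q)\triangle C$ has odd total size and decomposes into an $(s,t)$-path plus cycles, but an odd cycle in that decomposition can absorb the parity, leaving an \emph{even} $(s,t)$-path, and the decomposition gives you no handle on the number of leaps of the surviving path.

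The missing idea in the paper's proof is to not operate on a single leap at all. One takes $x$ and $y$ to be the \emph{first and last} vertices of $Q$ on $T$ and replaces the entire segment $Q[x,y]$ by $T[x,y]$. Then simplicity of the new path is immediate, because $Q[s,x]$ and $Q[y,t]$ meet $V(T)$ only in $x$ and $y$, so no collision with $T[x,y]$ is possible. Parity is preserved because replacing each $T$-leap on $Q[x,y]$ by the tree path between its endpoints yields a walk in $T$ from $x$ to $y$ of the same parity as $|Q[x,y]|$ (each replacement preserves parity since no leap is parity-changing), and every walk in a tree between two fixed vertices has the same parity as the unique tree path. The weight bound $w(T[x,y])\le w(Q[x,y])$ is exactly statement~(1) of Lemma~\ref{lem:T-min-pathlength}, and the new path has strictly fewer leaps since all $T$-leaps disappear while $T'$-leaps for other trees are unaffected. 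Without this global choice of $x$ and $y$, your local surgery cannot simultaneously guarantee a simple path, odd parity, non-increased weight, and a strictly smaller leap count.
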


\begin{proof}
We show that whenever an optimal solution contains a leap, then it contains at least one parity-changing leap.
Let~$Q$ be a minimum-weight odd $(s,t)$-path with a minimum number of leaps, and suppose that $Q$ contains at least one leap, but no parity-changing leap. 
Let $T$ be a tree in~$\T$ for which $Q$ contains a $T$-leap, and let $x$ and $y$ be the first and the last vertex, respectively, of~$Q$ on~$T$ when we traverse~$Q$ in an arbitrarily fixed direction. Clearly, $x \neq y$ as $Q$ contains a $T$-leap and is a path. 

Consider the walk $W$ that we obtain if we traverse $Q[x,y]$ in one direction with the modification that whenever we reach a $T$-leap~$L$, we  replace~$L$ by the path contained in~$T$ between the endpoints of~$L$. Since no leap on~$Q$ is parity-changing, $|W|$ has the same parity as $|Q[x,y]|$.
Furthermore, any walk in~$T$ between two fixed vertices $x$ and $y$ has the same parity as  $|Q[x,y]|$, thus we can replace all edges of $Q[x,y]$ by the path~$T[x,y]$ without changing the parity of~$Q$. 
Therefore, $Q \setminus Q[x,y] \cup T[x,y]$ is an odd $(s,t)$-path, and its weight is at most~$w(Q)$ by statement~(1) of Lemma~\ref{lem:T-min-pathlength}. Moreover, it contains fewer leaps than~$Q$, contradicting the definition of~$Q$.
\end{proof}

We remark that if $\mathcal{T}$ contains only a single tree, then even a stronger statement holds: a minimum-weight odd $(s,t)$-path either uses no leaps, or it uses \emph{exactly one} parity-changing leap. 
However, even in this special case, a minimum-weight odd $(s,t)$-path may need to use an arbitrarily large number of leaps; see Figure~\ref{fig:interlaced} for an example.
\begin{figure}[t]
    \centering
    \begin{tikzpicture} [xscale = .65, nodeStyle/.style={circle, draw, minimum size=2em}]
        \node (s) at (2,0) [nodeStyle] {\footnotesize$s$};
            \node (t) at (4,0) [nodeStyle] {\footnotesize$t$};
            \node (v1) at (2,1) [nodeStyle] {};
            \node (a1) at (0,2) [nodeStyle] {};
            \node (x) at (2,2) [nodeStyle] {\footnotesize$x$};
            \node (y) at (4,2) [nodeStyle] {\footnotesize$y$};
            \node (v2) at (6,2) [nodeStyle] {};
            \node (a4) at (8,2) [nodeStyle] {};
            \node (b1) at (10,2) [nodeStyle] {};
            \node (a2) at (12,2) [nodeStyle] {};
            \node (b4) at (14,2) [nodeStyle] {};
            \node (v3) at (16,2)[nodeStyle] {};       
            \node (a3) at (16,1) [nodeStyle] {};
             \node (a) at (18,2) [nodeStyle] {\footnotesize$a$};
            \node (b3) at (20,2) [nodeStyle] {};
            \node (b) at (22,2) [nodeStyle] {\footnotesize$b$};
    
        \foreach \a \b in {s/v1,v1/x, t/y}{
            \draw (\a) edge[line width=1pt]  (\b);};
        \foreach \a \b in {a1/x,x/y,y/v2,a4/v2,a4/b1,b1/a2,a2/b4,b4/v3,v3/a,a/b3,b3/b,a3/v3}{
            \draw (\a) edge[, line width=1pt, double, double distance=2pt, MyPurple]  (\b);};     
        \draw (a) edge[line width=2.5pt,bend left=40,MyOrange] node[above]{ \large{$L$}}  (b);
        \foreach \a \b in {a1/b1/,a2/a, b3/a3,b4/a4}{
            \draw (\a) edge[line width=1pt,bend left=30] (\b);};
    \end{tikzpicture}
    \caption{An example where there is a unique odd $(s,t)$-path~$P$. The negative tree $T^-$ is shown with purple double lines, and the unique parity-changing leap $L$ is shown with an orange thick line. The path~$P$ must use all leaps in the figure.} 
    \label{fig:interlaced}
\end{figure}
By contrast, a shortest $(s,t)$-path intersects each tree  $T \in \T$ in a single (possibly empty) path, so it never uses leaps. 
The next result is quite trivial, but we present its proof for completeness.
\begin{lem}
\label{lem:shortest-path-simple}
Let $P$ be a minimum-weight $(s,t)$-path in~$G$, 
and let $T$ be a tree in $\T$.
Then for any two vertices~$u$ and~$v$ in~$V(P) \cap V(T)$, it holds that $P[u,v]=T[u,v]$.
\end{lem}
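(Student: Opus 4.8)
The plan is to argue by contradiction, exploiting the minimality of $w(P)$ together with the conservativeness of $w$. Suppose $P$ is a minimum-weight $(s,t)$-path and $T \in \T$, but there exist $u,v \in V(P) \cap V(T)$ with $P[u,v] \neq T[u,v]$. Among all such pairs, I would pick $u$ and $v$ so that the subpath $P[u,v]$ is as short as possible (in number of edges); this guarantees that $P[u,v]$ meets $T$ only in its endpoints, i.e., $V(P[u,v]) \cap V(T) = \{u,v\}$, so that $P[u,v]$ is precisely a $T$-leap $L$ (note $P[u,v] \cap T = \emptyset$ because $P[u,v]$ shares no vertex other than $u,v$ with $T$, hence no edge).

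Now consider the $(s,t)$-path $P' := P \setminus P[u,v] \cup T[u,v]$. First I would check that this is indeed a path: since $V(P[u,v]) \cap V(T) = \{u,v\}$ and $P$ is a path, replacing $P[u,v]$ with $T[u,v]$ cannot create a repeated vertex — any vertex of $T[u,v]$ other than $u,v$ lies on $T$ but not on $P[u,v]$, and it also cannot lie elsewhere on $P$ because $u$ and $v$ were chosen as consecutive points of $V(P)\cap V(T)$ along $P$ (so the portion of $P$ outside $P[u,v]$ meets $T$ only possibly at $u$ and $v$ themselves — here one should be slightly careful, but the key point is that $T[u,v] \subseteq T$ and the rest of $P$ touches $V(T)$ only at $u$ and $v$). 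Then I would bound the weight: by statement~(1) of Lemma~\ref{lem:T-min-pathlength} applied to the $(u,v)$-path $P[u,v]$ (which uses every negative edge at most once, being a path), we have $w(P[u,v]) \geq w(T[u,v])$, hence $w(P') \leq w(P)$. Since $P$ has minimum weight, $w(P') = w(P)$, so $P'$ is also a minimum-weight $(s,t)$-path.

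The remaining task is to extract an actual contradiction rather than merely "another optimal path". The clean way is to observe that $w(P') = w(P)$ forces $w(P[u,v]) = w(T[u,v])$, and then the cycle $C := P[u,v] \cup T[u,v]$ has $w(C) = w(P[u,v]) + w(T[u,v]) - 2w(P[u,v] \cap T[u,v])$; but $P[u,v] \cap T = \emptyset$, so $w(C) = 2w(T[u,v])$. Since $C$ is a genuine cycle (it is the union of two openly disjoint $(u,v)$-paths, namely $P[u,v]$ and $T[u,v]$, which are distinct by assumption) and $w$ is conservative, $w(C) \geq 0$, giving $w(T[u,v]) \geq 0$ — which is fine but not yet contradictory. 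To finish, I would instead argue at the level of the counterexample: pick $P$ to be a minimum-weight $(s,t)$-path that, among all such, minimizes $|V(P)\cap V(T)|$ summed over all $T\in\T$ (or minimizes the number of leaps, as in Lemma~\ref{lem:T-1parity-changing-leap}); then $P'$ constructed above has strictly fewer intersection vertices with $T$ (since $T[u,v]$ might reintroduce points of $T$, one should instead count leaps: $P'$ has one fewer leap on $T$, and creates no new leaps), contradicting minimality.

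The main obstacle is the bookkeeping in verifying that $P'$ is a simple path and that it has strictly fewer leaps (or strictly smaller $\sum_T |V(P)\cap V(T)|$) than $P$ — in particular ruling out that substituting $T[u,v]$ for $P[u,v]$ creates new self-intersections with the rest of $P$ or with other trees in $\T$. This is handled by choosing $u,v$ to bound a minimal "bad" subpath so that $P[u,v]$ is a leap, and by noting that $T[u,v]\subseteq T$ together with the fact that the rest of $P$ meets $V(T)$ only at $u$ and $v$ (a consequence of the minimal choice); everything else is a direct application of Lemma~\ref{lem:T-min-pathlength}(1) and conservativeness. Given how short the statement is, I expect the author's proof to simply do the leap-replacement and invoke Lemma~\ref{lem:T-min-pathlength}(1) under a "fewest leaps" assumption, essentially mirroring the argument in Lemma~\ref{lem:T-1parity-changing-leap}.
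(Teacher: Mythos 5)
There are two genuine problems here. First, your choice of $u$ and $v$ --- consecutive along $P$ among the vertices of $V(P)\cap V(T)$, so that $P[u,v]$ is a $T$-leap --- does not guarantee that $P' = (P \setminus P[u,v]) \cup T[u,v]$ is a simple path: the internal vertices of $T[u,v]$ lie on $T$ but may well appear elsewhere on $P$. Nothing in your minimal choice prevents the portion of $P$ outside $P[u,v]$ from revisiting $V(T)$, and your parenthetical claim that ``the portion of $P$ outside $P[u,v]$ meets $T$ only possibly at $u$ and $v$'' is false in general --- it is essentially the statement you are trying to prove. The paper makes the dual choice: it picks $u,v \in V(P)\cap V(T)$ so that $T[u,v]$ contains no vertex of $P$ other than $u$ and $v$ (such a pair exists whenever the lemma fails, by subdividing a bad tree path at its intersection points with $P$). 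With that choice, inserting $T[u,v]$ cannot create repeated vertices, so $P'$ is automatically a path and the bookkeeping you worry about disappears.

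Second, you stop one step short of the contradiction and then reach for unnecessary machinery. Since $P[u,v]$ and $T[u,v]$ are openly disjoint, $C = P[u,v]\cup T[u,v]$ is a cycle, and Lemma~\ref{lem:closed-walk} (or conservativeness directly) gives $w(P[u,v]) + w(T[u,v]) \ge 0$. But $T[u,v]$ is a nonempty subset of $T \subseteq E^-$, so $w(T[u,v]) < 0$, hence $w(T[u,v]) < -w(T[u,v]) \le w(P[u,v])$: the replacement \emph{strictly} decreases the weight, contradicting the minimality of $P$ outright. You derive exactly $w(T[u,v]) \ge 0$ and call it ``fine but not yet contradictory,'' overlooking that every edge of $T$ has negative weight, so this \emph{is} the contradiction. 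Consequently the ``fewest leaps'' secondary minimization is not needed --- and it is precisely the part of your argument that relies on $P'$ being a simple path with a controlled leap structure, which is where the first gap bites. Lemma~\ref{lem:T-min-pathlength}(1) is likewise not needed; the paper's proof uses only Lemma~\ref{lem:closed-walk} and the negativity of the tree edges.
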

\begin{proof}
If the statement of the lemma fails, then there exist vertices~$u$ and~$v$ in~$T$ such that the only vertices of~$P$ appearing on $T[u,v]$ are~$u$ and~$v$, and $uv$ is not an edge of~$P$. 
Then Lemma~\ref{lem:closed-walk} implies 
$w(P[u,v])+w(T[u,v])\geq 0$. 
Thus, replacing the subpath~$P[u,v]$ of~$P$ with~$T[u,v]$ results in an $(s,t)$-path whose weight is strictly less than $w(P)$, a contradiction.
\end{proof}

\section{Parity Constrained Odd Paths}
\label{sec:parity}

In this section, we consider the \SPaPrOP{} problem, a variant of \SOP{} in which some edges have parity constraints on their position along the path. 
Let us recall the following definition stated in Section~\ref{sec:intro}.

\begin{defn}\label{def:parity}
Let $F_\even \subseteq E$ and $F_\odd \subseteq E$ be two disjoint edge sets in~$G$.
An odd path $Q$ is \emph{$(F_\even,F_\odd)$-constrained} if the sequence number of each edge in $Q \cap F_\even$ is even, and the sequence number of each edge in $Q\cap F_\odd$ is odd. For an illustration, see Figure~\ref{fig:def_SPaPrOP}.
\end{defn}

\begin{figure}
    \centering
    \begin{subfigure}[t]{0.45\linewidth}
        \begin{tikzpicture}[xscale=.8,yscale=.9,nodeStyle/.style={circle, draw, minimum size=2em}]
           \node (aux) at (0,-2) {};
           \node (s) at (0,2) [nodeStyle] {\footnotesize$s$};
            \node (v1) at (1,0) [nodeStyle] {\footnotesize$v_1$};
            \node (v2) at (2,2) [nodeStyle] {\footnotesize$v_2$};
            \node (v3) at (3,0) [nodeStyle] {\footnotesize$v_3$};
            \node (v4) at (4,2) [nodeStyle] {\footnotesize$v_4$};
            \node (v5) at (6,2) [nodeStyle] {\footnotesize$v_5$};
            \node (t) at (5,0) [nodeStyle] {\footnotesize$t$};
        
            \foreach \a \b in {s/v1,v2/v3,v4/t}{
                \draw (\a) edge[line width=1.5pt, dashed, greenish]  (\b);};
            \foreach \a \b in {s/v2,s/v3,v1/v4,v4/v5}{
                \draw (\a) edge[line width=1pt ]  (\b);};
            \foreach \a \b in {v1/v2,v3/v4}{
                \draw (\a) edge[line width=1.5pt, densely dotted, MyOrange]  (\b);};  
        \end{tikzpicture}
        \caption{Illustration of Definition~\ref{def:parity}.}
        \label{fig:def_SPaPrOP}
    \end{subfigure}
    \begin{subfigure}[t]{0.45\linewidth}
        \begin{tikzpicture}[xscale=.8,yscale=.6,nodeStyle/.style={circle, draw, minimum size=2em}]
            \node (s) at (0,2+5) [nodeStyle] {\footnotesize$s$};
            \node (v1) at (1,0+5) [nodeStyle] {\footnotesize$v_1$};
            \node (v2) at (2,2+5) [nodeStyle] {\footnotesize$v_2$};
            \node (v3) at (3,0+5) [nodeStyle] {\footnotesize$v_3$};
            \node (v4) at (4,2+5) [nodeStyle] {\footnotesize$v_4$};
            \node (t) at (5,0+5) [nodeStyle] {\footnotesize$t$};
            \node (v5) at (6,2+5) [nodeStyle] {\footnotesize$v_5$};
            \node (s-2) at (0,2) [nodeStyle, gray!30] {\footnotesize$s'$};
            \node (v1-2) at (1,0) [nodeStyle] {\footnotesize$v'_1$};
            \node (v2-2) at (2,2) [nodeStyle] {\footnotesize$v'_2$};
            \node (v3-2) at (3,0) [nodeStyle] {\footnotesize$v'_3$};
            \node (v4-2) at (4,2) [nodeStyle] {\footnotesize$v'_4$};
            \node (t-2) at (5,0) [nodeStyle, gray!30] {\footnotesize$t'$};
            \node (v5-2) at (6,2) [nodeStyle] {\footnotesize$v'_5$};
            
            \foreach \a \b in {v1/v1-2,v2/v2-2,v3/v3-2,v4/v4-2,v5/v5-2}{ %
                \draw (\a) edge[line width=.7pt,dashdotted]  (\b);};
            \foreach \a \b in {s/s-2,t/t-2}{ %
                \draw (\a) edge[line width=.7pt,dashdotted, gray!30]  (\b);};
                
            \foreach \a \b in {s/v1,v2/v3,v4/t}{%
                \draw (\a) edge[line width=1.5pt, dashed, greenish]  (\b);};
            \foreach \a \b in {s/v2,s/v3,v1/v4,v4/v5}{
                \draw (\a) edge[line width=1pt ]  (\b);};
            \foreach \a \b in {v1/v2,v3/v4}{ %
                \draw (\a) edge[line width=1.5pt, densely dotted, gray!30]  (\b);};
            
            \foreach \a \b in {s-2/v1-2,v2-2/v3-2,v4-2/t-2}{%
                \draw (\a) edge[line width=1.5pt, dashed, gray!30]  (\b);};
            \foreach \a \b in {s-2/v2-2,s-2/v3-2}{%
                \draw (\a) edge[line width=1.5pt, gray!30]  (\b);};
            \foreach \a \b in {v1-2/v4-2,v4-2/v5-2}{
                \draw (\a) edge[line width=1pt ]  (\b);};
            \foreach \a \b in {v1-2/v2-2,v3-2/v4-2}{ %
                \draw (\a) edge[line width=1.5pt, densely dotted, MyOrange]  (\b);};              
        
            \path[draw, rounded corners] ($(s)+(-1,1)$) -- ($(v5)+(1,1)$) -- ($(v5)+(1,-2-1)$) -- ($(s)+(-1,-2-1)$) -- cycle;
            \node (G) at ($(s)+(-1.5,1)$) {$G$};
            \path[draw, rounded corners] ($(s-2)+(-1,1)$) -- ($(v5-2)+(1,1)$) -- ($(v5-2)+(1,-2-1)$) -- ($(s-2)+(-1,-2-1)$) -- cycle;
            \node (G-2) at ($(s-2)+(-1.5,1)$) {$G'$};
        \end{tikzpicture}
        \caption{Construction for Theorem~\ref{thm:parity-constr-oddpath-inP}.}
        \label{fig:SPaPrOP_construction}
    \end{subfigure}
        
    \caption{(a) Edges in $F_\even$ are shown as dotted orange lines, $F_\odd$ as dashed green lines, and $E \setminus (F_\even \cup F_\odd)$  as solid black. Note that there exists three  $(F_\even,F_\odd)$-constrained odd  $(s,t)$-paths, namely $Q_1 = s,v_1,v_2,v_3,v_4,t$, $Q_2=s,v_3,v_4t$ and $Q_3=s,v_1,v_4,t$.
    (b) Graph $H$ obtained from (a). Removed arcs and vertices are shown in gray. Dashdotted lines have weight 0, while the rest have original weight.}
    \label{fig:all_def_SPaPrOP}
\end{figure}
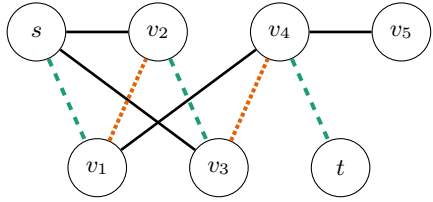
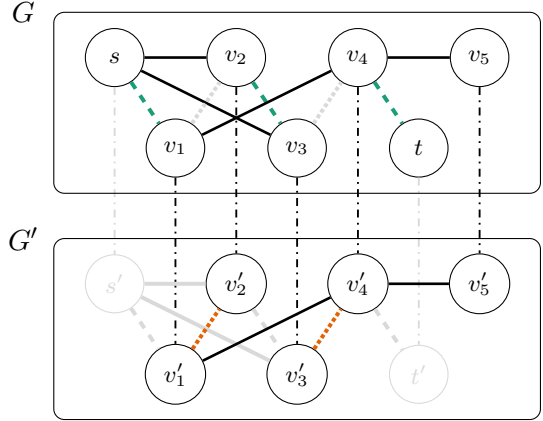

\begin{center}
\fbox{ 
\parbox{13.6cm}{
\begin{tabular}{l}\SPaPrOP{}:  \end{tabular} \\
\begin{tabular}{p{1cm}p{11.5cm}}
\textbf{Input}: & An undirected graph $G=(V,E)$, a weight function~$w\colon E \to \mathbb{R}$, two vertices $s$ and $t$ in~$G$, and disjoint edge sets $F_\even\subseteq E$ and $F_\odd\subseteq E$. \\
\textbf{Goal}: & Find a minimum-weight $(F_\even,F_\odd)$-constrained odd $(s,t)$-path.
\end{tabular}
}}
\end{center}

This problem is a key step for our algorithms solving the \SOP{} problem.
We show that \SPaPrOP{} can be solved in polynomial time if all negative edges are contained in $F_\even\cup F_\odd$, assuming conservative weights. 

\begin{thm}
\label{thm:parity-constr-oddpath-inP}
If $w$ is conservative and $F_\even \cup F_\odd$ contains all negative edges of the input graph $G=(V,E)$, then the \SPaPrOP{} problem can be solved in $\mathcal{O}(mn+n^2 \log n)$ time, where $n=|V|$ and $m=|E|$.
\end{thm}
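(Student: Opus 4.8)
The plan is to reduce \SPaPrOP{} (under the stated hypotheses) to a single shortest-odd-path computation in an auxiliary graph that has no negative cycles, and then to solve that via a Bellman--Ford-type preprocessing step together with the matching-based routine behind Theorem~\ref{thm:SOP_nonnegative_weight}. Concretely, I would build the graph~$H$ sketched in Figure~\ref{fig:SPaPrOP_construction}: take a copy $G_\odd$ of~$G$ in which all edges of~$F_\even$ are deleted, a second copy $G_\even$ of~$G$ in which all edges of~$F_\odd$ are deleted and the vertices $s$ and~$t$ are also deleted, and for each $v\in V\setminus\{s,t\}$ add a ``linking'' edge of weight~$0$ between the copy of~$v$ in~$G_\odd$ and the copy of~$v$ in~$G_\even$; every edge inherited from~$G$ keeps its original weight. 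The idea is that $G_\odd$ hosts the edges of the sought path that lie at odd positions and $G_\even$ those at even positions; two consecutive edges of the path, having opposite position-parities, get glued by the zero-weight linking edge at the vertex between them, and the fact that the first and last edge of an odd $(s,t)$-path have odd sequence number is why $s$ and~$t$ survive only in~$G_\odd$. Putting each $F_\even$-edge only into~$G_\even$ and each $F_\odd$-edge only into~$G_\odd$ is exactly what pins it to a position of the correct parity.

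I would then check that $H$ is conservative. Contracting all linking edges and identifying the two copies of each vertex sends any closed walk of~$H$ to a closed walk of~$G$. The key point is that by hypothesis every negative edge of~$G$ lies in $F_\even\cup F_\odd$, so it appears in exactly one of the two copies; hence a \emph{cycle} of~$H$ traverses each negative edge of~$G$ at most once, and its image in~$G$ is a closed walk that uses no edge of~$E^-$ more than once. Lemma~\ref{lem:closed-walk} then gives that this image has non-negative weight, which equals the weight of the cycle; so $H$ has no negative cycle.

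Next I would establish a weight-preserving correspondence between $(F_\even,F_\odd)$-constrained odd $(s,t)$-paths in~$G$ and minimum-weight odd paths between the copies of~$s$ and~$t$ in~$H$. The forward direction is routine: given a constrained odd path $Q=v_0v_1\dots v_\ell$, route its $i$-th edge through~$G_\odd$ if $i$ is odd and through~$G_\even$ if $i$ is even, inserting at each internal vertex the zero-weight linking edge; since $Q$ is a simple path and $\ell$ is odd, this yields an odd $(s,t)$-path in~$H$ of the same weight. For the reverse direction one takes a minimum-weight odd $(s,t)$-path~$\pi$ in~$H$ and projects it to an $(s,t)$-walk~$W$ in~$G$; the copy in which each edge sits forces the required parities, so $W$ respects $F_\even$ and~$F_\odd$. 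If $W$ were not a simple path, then some vertex~$v$ would be visited twice, i.e.\ $\pi$ would visit both copies of~$v$; the sub-path of~$\pi$ between these two occurrences projects to a closed walk at~$v$ using no edge of~$E^-$ twice, so by Lemma~\ref{lem:closed-walk} it has non-negative weight and can be replaced by the single zero-weight linking edge at~$v$ without increasing $w(\pi)$ — a short parity computation shows this keeps $\pi$ odd — contradicting minimality of~$\pi$. Hence $W$ is a genuine constrained odd $(s,t)$-path of the same weight, and an optimal~$\pi$ yields an optimal solution of \SPaPrOP{} (with infeasibility of \SPaPrOP{} corresponding to $s$ and~$t$ lying in different components of~$H$).

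For the running time, $H$ has $\mathcal{O}(n)$ vertices and $\mathcal{O}(m)$ edges and is conservative, so one run of Bellman--Ford in $\mathcal{O}(mn)$ time either reports that $s$ and~$t$ are disconnected in~$H$ or prepares the instance so that we may pass to an equivalent one with non-negative edge weights on which Theorem~\ref{thm:SOP_nonnegative_weight} applies; since shifting edge weights by vertex potentials changes the weight of every odd $(s,t)$-path by the same additive constant, shortest odd $(s,t)$-paths are preserved. The matching-based implementation behind Theorem~\ref{thm:SOP_nonnegative_weight} runs in $\mathcal{O}(mn+n^2\log n)$ time, and translating back through the correspondence gives the claimed bound.

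I expect the reverse direction of the correspondence to be the main obstacle: one must argue precisely that a minimum-weight odd $(s,t)$-path in~$H$ really does project to an honest simple path in~$G$ with all position-parities correct, rather than merely to a constrained odd \emph{walk}, and the parity bookkeeping in the ``shortcut a repeated vertex'' step has to be carried out so that oddness is genuinely preserved. A secondary point requiring care is the exact mechanism by which the negative edges surviving inside~$H$ (the copies of $F_\odd$-edges in~$G_\odd$ and of $F_\even$-edges in~$G_\even$) are removed so that Theorem~\ref{thm:SOP_nonnegative_weight} can be invoked as a black box within the stated time budget.
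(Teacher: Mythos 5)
Your auxiliary graph $H$ is exactly the one the paper builds, but the reduction you put on top of it does not work: the paper reduces to \emph{minimum-weight perfect matching} in $H$, not to a shortest odd path in $H$, and this difference is essential. In your reverse direction you assert that ``the copy in which each edge sits forces the required parities,'' but nothing in the shortest-odd-path formulation forces a path in $H$ to alternate between the two copies at every internal vertex. A path in $H$ may take two or more consecutive edges inside $G_\odd$ without using a linking edge; its projection to $G$ then places an edge of $F_\odd$ at an even position, violating the constraint, and a cheapest odd $(s,t)$-path in $H$ will in general do exactly this (for instance, if $F_\odd$ contains a very negative edge $uv$, the path $s,u,v,\dots,t$ lying entirely in $G_\odd$ is perfectly legal in $H$ but puts $uv$ at position $2$). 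Note also that a \emph{properly} alternating image of an $\ell$-edge path has length $2\ell-1$ in $H$, which is odd for every $\ell$, so ``$\pi$ is odd in $H$'' carries no information about the parity of the projected walk; it is the alternation, not the parity of $\pi$, that encodes the constraints, and your formulation does not enforce it. The perfect-matching formulation is precisely what repairs this: a perfect matching must cover both copies of every vertex, so an internal vertex $v$ of the encoded path is forced to be matched by one edge in the $G$-copy \emph{and} one edge in the $G'$-copy, which is exactly the alternation (and hence the sequence-number parities) you need. The repaired-vertex/shortcut argument you sketch also leaves the parity of $\pi$ uncontrolled, since replacing a subpath by a single linking edge changes $|\pi|$ by an arbitrary amount.

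A second, independent gap is the plan to eliminate the negative edges surviving in $H$ by Bellman--Ford potentials before invoking Theorem~\ref{thm:SOP_nonnegative_weight}. In an undirected graph the reduced weight $w(uv)+\pi(u)-\pi(v)$ must be non-negative for \emph{both} orientations of $uv$, i.e.\ one needs $w(uv)\ge|\pi(u)-\pi(v)|$, which is impossible for a negative edge; this is exactly the phenomenon noted in the introduction (bidirecting a negative edge creates a negative cycle), and it is why Theorem~\ref{thm:SOP_nonnegative_weight} cannot be reached this way. The paper sidesteps both problems by computing a minimum-weight perfect matching in $H$ directly with Gabow's algorithm in $\mathcal{O}(mn+n^2\log n)$ time; conservativeness and the hypothesis $E^-\subseteq F_\even\cup F_\odd$ are used only to argue that the extra components of $M_1\cup M_2$ (even cycles and doubled edges) have non-negative weight, so that discarding them cannot increase the cost.
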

\begin{proof}
We present a generalization of the well-known algorithm for determining a shortest odd (or, equivalently, even) path between two vertices in an undirected graph with non-negative edge weights. 

Let $(G,w,s,t,F_\even,F_\odd)$ be our input 
 for \SPaPrOP{} with $G=(V,E)$ such that $F_\even \cup F_\odd$ contains all edges with negative weight.
An illustration of the following construction is shown in Figure~\ref{fig:SPaPrOP_construction}.
We start by creating a copy~$G'$ of~$G$ where each vertex or edge $x$ in~$G$ has a copy~$x'$ in~$G'$. 
The copy of each edge~$e \in E$ has the same weight as~$e$, and additionally, we add a new edge between each vertex~$v \in V$ and its copy~$v'$ of weight~0. 
Finally we delete~$F_\even$ from~$G$, and we delete the edge set~$F_\odd':=\{e':e \in F_\odd\}$ as well as vertices~$s'$ and $t'$ from $G'$.
Let~$H$ denote the graph thus obtained. 

We claim that the minimum weight of an $(F_\even,F_\odd)$-constrained odd $(s,t)$-path in~$G$ equals the minimum weight of a perfect matching in~$H$. First, let $Q$ be an $(F_\even,F_\odd)$-constrained odd $(s,t)$-path. We define an edge set $M_Q$ as follows:
\begin{equation*}
    M_Q \coloneqq \{e: e \in Q,\ \sq(e,Q) \textrm{ is odd}\}
        \cup \{e': e \in Q,\ \sq(e,Q) \textrm{ is even}\}
        \cup \{v v': v \in V \setminus V(Q)\}.
\end{equation*}

Observe first that since~$Q$ is $(F_\even,F_\odd)$-constrained, each edge of $M_Q$ is present in~$H$. 
Therefore, it is easy to see that $M_Q$ is a perfect matching in~$H$, and its weight is exactly~$w(Q)$.

Second, let~$M$ be a perfect matching in~$H$. 
Let $M_1=\{e \in E: e \in M\}$ and $M_2=\{e \in E: e' \in M\}$.
Consider the multiset~$Q_M= M_1 \uplus M_2$ of edges.
Note that for each vertex~$v$ in~$V \setminus \{s,t\}$, either $v$ is incident to one edge of~$M_1$ and one edge of~$M_2$ (counting multiplicities), or $vv' \in M$.
Vertices~$s$ and~$t$ each are incident to one edge of~$M_1$ and no edge of~$M_2$. 
Moreover, an edge~$e \in E$ may be present both in~$M_1$ and $M_2$, but such an edge cannot have negative 
weight since 
$F_\even \cup F_\odd$ contains all edges with negative weight.

Therefore, each connected component of the subgraph~$(V,M_1 \cup M_2)$ of~$G$ is either an odd~$(s,t)$-path~$Q$ in~$G$ (alternating between~$M_1$ and~$M_2$), an even cycle in~$G$, or a single edge in $E \setminus (F_\even \cup F_\odd)$.
Recall that each cycle in~$G$ has non-negative weight, as $w$ is conservative.
Moreover, each $e \in E \setminus (F_\even \cup F_\odd)$ has non-negative weight by our assumption $E^- \subseteq F_\even \cup F_\odd$. 
Thus $w(Q) \leq w(M)$, and hence $Q$ is an odd $(s,t)$-path of weight at most~$w(M)$. 
Finally, observe that
 any edge of~$F_\even \cap Q$ belongs to~$M_2$ (and not~$M_1$), and  thus has an even sequence number in~$Q$, and similarly, 
any edge of~$F_\odd \cap Q$ belongs to~$M_1$ (and not~$M_2$), and  thus has an odd sequence number in~$Q$, by our construction of~$H$.
Therefore, $Q$ is indeed an $(F_\even,F_\odd)$-constrained odd $(s,t)$-path, and has weight at most~$w(M)$.
This proves our claim.

As a consequence, in order to find a minimum-weight  $(F_\even,F_\odd)$-constrained odd $(s,t)$-path, we simply have to construct~$H$, find a minimum-weight perfect matching in~$H$ using, e.g., Gabow's algorithm~\cite{Gabow90}, and build the corresponding odd $(s,t)$-path as explained above.
The overall running time is $\mathcal{O}(mn+n^2 \log n)$.
\end{proof}

\section{Negative edges forming a tree}
\label{sec:tree}

In this section, we consider the special case of \SOP{} when the weight function~$w$ is conservative, and the set of edges whose weight is negative forms a tree~$T^-$ in~$G$. We give an algorithm for solving the problem in polynomial time, presented as Algorithm~\ref{alg:SOP-tree}. An illustration can be seen in Figure~\ref{fig:SOP_algoTREE}.

The high-level idea of the algorithm is as follows. 
By Lemma~\ref{lem:T-1parity-changing-leap}, any shortest odd $(s,t)$-path either contains no leaps, or it contains a parity-changing leap, hence our algorithm checks both cases and it generates two types of paths. 
First, to find a minimum-weight odd $(s,t)$-path~$P$ among those that contain  no leaps, we guess the first and the last vertices of~$P$ that appear on~$T^-$ 
by picking two vertices $a$ and $b$. Then we partition the path $T^-[a,b]$ into two sets of alternating edges, 
and solve an instance of the  \SPaPrOP{} problem with this partition.
This gives us the first type of path.
Second, to find a minimum-weight odd $(s,t)$-path~$P$ among those that contain  a parity-changing leap~$L$,
we guess the endpoints of~$L$ by picking two vertices $a$ and $b$. Then we compute a minimum-weight parity-changing leap~$L'$ with endpoints~$a$ and~$b$, and construct an odd $(s,t)$-path by using~$L'$ and two vertex-disjoint paths from~$\{s,t\}$ to~$\{a,b\}$ of minimum weight. 
This gives us the second type of path.
The output of the algorithm is the best path among these two types of solutions.

\begin{rem} \label{rem:disjoint-paths}
    The task on line~\ref{algLine:vertex_disj_paths} is to solve an instance of the \DISP{} problem that, given an undirected graph with conservative edge weights and terminals~$s_1,s_2,t_1,$ and~$t_2$,  the goal is to find two vertex-disjoint paths leading from~$\{s_1,s_2\}$ to~$\{t_1,t_2\}$ with minimum total weight. 
    This problem was shown to be $\NP$-hard
     by Schlotter and Sebő~\cite{schlotter2022shortest} even with a conservative weight function of the form $w\colon E \to \{ -1,1\} $. 
    Nevertheless, if the negative-weight edges form a tree, as in our case, then the problem can be solved in polynomial time~\cite{schlotter2023twopaths}.
\end{rem}

The algorithm uses the following definition.
\begin{defn}
    Given two distinct vertices~$a$ and~$b$ on a cycle~$C$, we say that they \emph{cut~$C$ into paths} $C_1$ and $C_2$ if both $C_1$ and $C_2$ have endpoints $a$ and $b$, and $C_1 \cup C_2 = C$.
\end{defn}

We first prove the correctness of Algorithm~\ref{alg:SOP-tree}.

\begin{algorithm}[t]
\caption{Solving \SOP{} with conservative weight function and the set of negative edges forms a tree.
}
\label{alg:SOP-tree}
\begin{algorithmic}[1]
\Require{A graph $G=(V,E)$, a conservative weight function $w\colon E \to \mathbb{R}$, vertices $s,t \in V$,  a set of negative edges $T^-=\{e \in E, w(e)<0\}$ forming a tree. }
\Ensure{A minimum-weight odd $(s,t)$-path or $\varnothing$ 
if no such path exists.}
\State Set $S \coloneqq \emptyset$, $w_S\coloneqq \infty$.
\ForAll{distinct vertices $a$ and $b$ on $T^-$}  \label{algLine:for_loop} \vspace{2pt}\Comment{\textit{Computations for first type of paths}}
	\State Set $F^a_\even \coloneqq \{e \in T^-[a,b]: \sq(e,T^-[a,b],a) \equiv 0 \mod 2\}$.\label{algLine:F_even} 
    
	\State Set $F^a_\odd \coloneqq \{e \in T^-[a,b]: \sq(e,T^-[a,b],a) \equiv 1 \mod 2\}$.\label{algLine:F_odd}
	\State Set $G'\coloneqq G-(T^- \setminus T^-[a,b])$. 
    \State Compute $Q \coloneqq \SPaPrOP (G',w,s,t,F^a_\even,F^a_\odd)$.\label{algLine:inst1}
	\If{$w(Q)<w_S$} set $S\coloneqq Q$, $w_S\coloneqq w(S)$. 	\EndIf
    \label{algLine:def_path2odd}
	
    \State Compute $Q \coloneqq \SPaPrOP (G',w,s,t,F^a_\odd,F^a_\even)$.\label{algLine:inst2}
	\If{$w(Q)<w_S$} set $S\coloneqq Q$, $w_S\coloneqq w(S)$. \label{algLine:def_path2even}	\EndIf 
 \vspace{3pt}
    \If{$\exists$ parity-changing leap from~$a$ to~$b$}    \Comment{\textit{Computations for second type of paths}} 
    \vspace{2pt}
\State Compute a minimum-weight parity-changing leap $L$ with endpoints~$a$ and $b$ in~$G$. \label{algLine:leap} 
 
 \Comment{{\it See Remark~\ref{rem:min_parity-changing_leap}}.}
    \If{$\exists$ two vertex-disjoint paths from~$\{s,t\}$ to~$\{a,b\}$}    
	\parState{Compute two vertex-disjoint paths~$P_s$ and~$P_t$ from $\{s,t\}$ to $\{a,b\}$ in~$G$ that minimizes $w(P_s)+w(P_t)$, with $s \in V(P_s)$ and $t \in V(P_t)$. \Comment{{\it See Remark~\ref{rem:disjoint-paths}}.}%
 \label{algLine:vertex_disj_paths}}    	\State Set $C\coloneqq L \cup T^-[a,b]$.
	\State Let $x$ be the vertex of~$C$ closest to~$s$ on~$P_s$, and set $P'_s \coloneqq P_s[s,x]$.\label{algLine:def_Ps}
	\State Let $y$ be the vertex of~$C$ closest to~$t$ on~$P_t$, and set $P'_t \coloneqq P_t[y,t]$.
	\State Let $C_1$ and $C_2$ be the two paths into which $x$ and $y$ cut $C$.
	\State Set $S_1\coloneqq P'_s \cup C_1 \cup P'_t$ and 
	$S_2\coloneqq P'_s \cup C_2 \cup P'_t$.
	\If{$S_1$ is odd}{ $S^\star\coloneqq S_1$
  {\bf else} $S^\star\coloneqq S_2$.}
	\EndIf
	\If{$w(S^\star)<w_S$} set $S\coloneqq S^\star$, $w_S\coloneqq w(S^\star)$.\label{algLine:def_path3} \EndIf
   \EndIf
   \EndIf
	\EndFor
\If{$S \neq \emptyset$} {\bf return}  $S$;
\Else{}  {\bf return } $\varnothing$.
\EndIf 
\end{algorithmic}
\end{algorithm}

\begin{figure}[ht]
    \centering
    \begin{subfigure}[t]{.45\textwidth}
        \begin{tikzpicture}[nodeStyle/.style={draw,circle,minimum size=2em}]
            \node (s) at (0,0) [nodeStyle] {\footnotesize$s$};
            \node (t) at (6,0) [nodeStyle] {\footnotesize$t$};
            \node (v1) at (0,3) [nodeStyle] {\footnotesize$v_1$};
            \node (v2) at (2,3) [nodeStyle] {\footnotesize$v_2$};
            \node (v3) at (4,3) [nodeStyle] {\footnotesize$v_3$};
            \node (v4) at (6,3) [nodeStyle] {\footnotesize$v_4$};
            \node (v5) at (4,1.5) [nodeStyle] {\footnotesize$v_5$};
            \node (v6) at (4,0) [nodeStyle] {\footnotesize$v_6$};
            \node (v7) at (2,0) [nodeStyle] {\footnotesize$v_7$};     
            \foreach \a \b \wpos in {v1/v2/above,v2/v3/above,v3/v4/above,v3/v5/left,v5/v6/left}{
                \draw (\a) edge[double,double distance=2pt,line width=1pt,MyPurple] node[\wpos] {\footnotesize -1} (\b);};
            \foreach \a \b \wpos \w  in {v1/v7/below left/2,s/v7/below/1,v7/v6/below/2,v6/t/below/3,v4/v5/below/3, v4/t/right/0, v2/v7/left/1}{
                \draw (\a) edge[line width=.6pt] node[\wpos] {\footnotesize \w}  (\b);}; 
        \end{tikzpicture}
        \caption{Original instance. }
        \label{fig:SOP_algoTREE-A}
    \end{subfigure}
    \begin{subfigure}[t]{.45\textwidth}
        \begin{tikzpicture}[nodeStyle/.style={draw,circle,minimum size=2em}]
            \node (s) at (0,0) [nodeStyle] {\footnotesize$s$};
            \node (t) at (6,0) [nodeStyle] {\footnotesize$t$};
            \node (v1) at (0,3) [nodeStyle] {\footnotesize$v_1$};
            \node (v2) at (2,3) [nodeStyle] {\footnotesize$a$};
            \node (v3) at (4,3) [nodeStyle] {\footnotesize$v_3$};
            \node (v4) at (6,3) [nodeStyle] {\footnotesize$v_4$};
            \node (v5) at (4,1.5) [nodeStyle] {\footnotesize$v_5$};
            \node (v6) at (4,0) [nodeStyle] {\footnotesize$b$};
            \node (v7) at (2,0) [nodeStyle] {\footnotesize$v_7$};    
            \foreach \a \b \wpos in {v1/v2/above,v3/v4/above}{
                \draw (\a) edge[line width=2pt,gray!30 ] node[\wpos] {\footnotesize -1} (\b);};
            \foreach \a \b \wpos   in {v2/v3/above,v5/v6/left}{%
                \draw (\a) edge[line width=1.8pt, dashed, greenish]  node[\wpos] {\footnotesize -1}  (\b);};
            \foreach \a \b \wpos  in {v3/v5/left}{ %
                \draw (\a) edge[line width=1.8pt, densely dotted, MyOrange]  node[\wpos] {\footnotesize -1}  (\b);};
            \foreach \a \b \wpos \w  in {v1/v7/below left/2, v7/v2/left/1,v6/t/below/$\phantom{,,}$3}{
                \draw (\a) edge[line width=.6pt] node[\wpos] {\footnotesize \w}  (\b);};
            \foreach \a \b \wpos \w  in {v7/s/below/2,v6/v7/below/1,v4/v5/below/3,v4/t/right/0}{
                \draw (\a) edge[line width=.6pt, decorate, decoration=zigzag] node[\wpos] {\footnotesize \w}  (\b);};
            \draw (v5.south east) edge[line width=1pt, decorate, decoration=zigzag,greenish] (v6.north east);
        \end{tikzpicture}
    \caption{Odd paths computed on lines~\ref{algLine:F_even}--\ref{algLine:def_path2even}.}
    \label{fig:SOP_algoTREE-C}
    \end{subfigure}
    \begin{subfigure}[t]{.45\textwidth}
        \begin{tikzpicture}[nodeStyle/.style={draw,circle,minimum size=2em}]
            \node (s) at (0,0) [nodeStyle] {\footnotesize$s$};
            \node (t) at (6,0) [nodeStyle] {\footnotesize$t$};
            \node (v1) at (0,3) [nodeStyle] {\footnotesize$v_1$};
            \node (v2) at (2,3) [nodeStyle] {\footnotesize$a$};
            \node (v3) at (4,3) [nodeStyle] {\footnotesize$v_3$};
            \node (v4) at (6,3) [nodeStyle] {\footnotesize$v_4$};
            \node (v5) at (4,1.5) [nodeStyle] {\footnotesize$v_5$};
            \node (v6) at (4,0) [nodeStyle] {\footnotesize$b$};
            \node (v7) at (2,0) [nodeStyle] {\footnotesize$v_7$};
            \foreach \a \b \wpos \w  in {v7/v6/below/2,v2/v7/right/1}{
                \draw (\a) edge[line width=1pt,MyOrange] node[\wpos] {\footnotesize \w}  (\b);};
            \path (v7) +(-0.35,+0.15) coordinate (v7');
            \foreach \a \b \wpos \w  in {v4/v5/below/3,  v6/t/below/3,v1/v2/above/-1,v1/v7'/below left/2}{
                \draw (\a) edge[line width=.6pt] node[\wpos] {\footnotesize \w}  (\b);};
            \foreach \a \b \wpos in {v3/v5/left,v5/v6/left}{
                \draw (\a) edge[line width=0.8pt, preaction={draw,greenish!50,-, double=greenish!50, double distance=2\pgflinewidth}] node[\wpos] {\footnotesize -1} (\b);};   
            \draw (v7) edge[line width=.6pt, decorate, decoration=zigzag, preaction={draw,MyPurple!50,-, double=MyPurple!50, double distance=2\pgflinewidth}]  node[below] {\footnotesize 1} (s);
            \draw (v7.north west)  edge[line width=.6pt, decorate, decoration=zigzag, preaction={draw,MyPurple!50,-, double=MyPurple!50, double distance=2\pgflinewidth}](v2.south west);
            \draw (v2) edge[line width=0.6pt, decorate, decoration=zigzag] node[above] {\footnotesize -1}  (v3);
            \draw (v3) 
            edge[line width=.6pt, decorate, decoration=zigzag, preaction={draw,greenish!50,-, double=greenish!50, double distance=2\pgflinewidth}] 
            node[above] {\footnotesize -1} (v4);
            \draw (v4) edge[line width=.6pt, decorate, decoration=zigzag, preaction={draw,greenish!50,-, double=greenish!50, double distance=2\pgflinewidth}] node[right, xshift=3pt] {\footnotesize 0}  (t);
        \end{tikzpicture}
        \caption{Odd path computed on lines~\ref{algLine:leap}--\ref{algLine:def_path3}.}
        \label{fig:SOP_algoTREE-D}
    \end{subfigure}
    \caption{Example of \SOP{} instance with conservative weights when running Algorithm~\ref{alg:SOP-tree} for $a=v_2$ and $b=v_6$. With these specific $a$ and $b$, the solution generated up to this point is given in Figure~(c). \\
    (a) Original instance. Tree $T^-$ is shown using purple double lines. \\
    (b) Paths obtained on lines~\ref{algLine:F_even}--\ref{algLine:def_path2even}. 
    Deleted edges are shown in gray. Set $F^a_\even$ is shown by dotted orange lines, and $F^a_\odd$ by dashed green lines. Solving the \SPaPrOP{} problem with instance
    $(G',w,s,t,F^a_\odd,F^a_\even)$ generates the path on vertices~$s,v_7,b,t$ which is a shortest odd $(s,t)$-path in $G-T^-$. Solving the instance $(G',w,s,t,F^a_\even,F^a_\odd)$ generates the zigzagged path.\\
    (c) Path obtained on lines~\ref{algLine:leap}--\ref{algLine:def_path3} shown in zigzag. The parity-changing leap $L=\{ (a,v_7), (v_7,b) \}$ is shown in orange. An $(s,a)$-path of minimum weight is highlighted in purple and an $(t,b)$-path of minimum weight is highlighted in green.}
    \label{fig:SOP_algoTREE}
\end{figure}

\begin{lem}
\label{lem:algoTREE-correctness}
Algorithm~\ref{alg:SOP-tree} correctly solves the \SOP{} problem when the weight function is conservative and negative edges form a tree.
\end{lem}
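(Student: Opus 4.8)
The plan is to verify two things: (a) every path the algorithm outputs is a genuine odd $(s,t)$-path in $G$, and (b) some minimum-weight odd $(s,t)$-path has its weight attained by one of the candidate paths considered. Correctness then follows since the algorithm returns the lightest candidate.

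For (a), I would check each of the two construction types. For the first type, on lines~\ref{algLine:inst1} and~\ref{algLine:inst2} we call \SPaPrOP{} on $G'=G-(T^-\setminus T^-[a,b])$, where all remaining negative edges lie on $T^-[a,b]=F^a_\even \cup F^a_\odd$; hence Theorem~\ref{thm:parity-constr-oddpath-inP} applies and returns an $(F^a_\even,F^a_\odd)$-constrained odd $(s,t)$-path (or $\infty$), which in particular is an odd $(s,t)$-path in $G$. For the second type, I would argue that $S^\star$ is a path: $P'_s=P_s[s,x]$ and $P'_t=P_t[y,t]$ are openly disjoint by the choice of $P_s,P_t$ as vertex-disjoint paths and of $x,y$ as the first vertices of $C$ encountered from $s$ and $t$; moreover $P'_s$ and $P'_t$ meet $C$ only at $x$ and $y$ respectively, so gluing them to one of the two arcs $C_1,C_2$ between $x$ and $y$ yields a genuine $(s,t)$-path. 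Since $|C_1|+|C_2|=|C|$ and $C=L\cup T^-[a,b]$ is odd (as $L$ is parity-changing), exactly one of $S_1,S_2$ is odd, so picking $S^\star$ accordingly gives an odd $(s,t)$-path.

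For (b), let $Q^\star$ be a minimum-weight odd $(s,t)$-path chosen with the fewest leaps; apply Lemma~\ref{lem:T-1parity-changing-leap}. If $Q^\star$ contains no leap, let $a$ and $b$ be its first and last vertices on $T^-$ (if $Q^\star$ avoids $T^-$ entirely, or meets it in a single vertex, it already is a valid output of some \SPaPrOP{} call with $F^a_\even=F^a_\odd=\emptyset$ or with a trivial $T^-[a,b]$, so assume $a\neq b$). Having no leap on $T^-$, the portion $Q^\star[a,b]$ enters $T^-$ and never detours off it between consecutive $T^-$-vertices except along $T^-$ itself; using conservativeness (Lemma~\ref{lem:closed-walk}, as in Lemma~\ref{lem:shortest-path-simple}) one can replace $Q^\star[a,b]$ by $T^-[a,b]$ without increasing weight and without adding leaps, so w.l.o.g.\ $Q^\star[a,b]=T^-[a,b]$; then the edges of $Q^\star$ on $T^-[a,b]$ have sequence numbers (counted from $a$) matching one of the two alternating partitions $(F^a_\even,F^a_\odd)$ or $(F^a_\odd,F^a_\even)$ depending on the parity of $|Q^\star[s,a]|$. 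Hence $Q^\star$ is an $(F^a_\even,F^a_\odd)$- or $(F^a_\odd,F^a_\even)$-constrained odd $(s,t)$-path in $G'$ (it uses no edge of $T^-\setminus T^-[a,b]$ since it has no leap), so one of the two \SPaPrOP{} calls for this $(a,b)$ returns a path of weight at most $w(Q^\star)$.

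If instead $Q^\star$ contains a parity-changing leap, pick the one, call it $L^\star$, with endpoints $a,b$ on some tree — here the unique tree $T^-$ — and write $Q^\star$ as a concatenation of a piece from $s$ to one endpoint of $L^\star$, then $L^\star$, then a piece to $t$; these two outer pieces together with $L^\star$ show that $\{s,t\}$ is joined to $\{a,b\}$ by two vertex-disjoint paths of total weight $w(Q^\star)-w(L^\star)$. For this choice of $(a,b)$ the algorithm computes a minimum-weight parity-changing leap $L$ (Remark~\ref{rem:min_parity-changing_leap}) with $w(L)\le w(L^\star)$, and minimum-weight vertex-disjoint $P_s,P_t$ with $w(P_s)+w(P_t)$ at most the weight of the two outer pieces of $Q^\star$. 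Combining $P'_s, P'_t$ with an arc of $C=L\cup T^-[a,b]$ and using statement~(1) of Lemma~\ref{lem:T-min-pathlength} (or a direct conservativeness argument: $C_i$ has weight at most that of the complementary route through $C$) bounds $w(S^\star)$ by $w(P_s)+w(P_t)+\min(w(C_1),w(C_2))\le w(Q^\star)$. The main obstacle I expect is this last weight bookkeeping: one must be careful that trimming $P_s,P_t$ to $P'_s,P'_t$ at their first contact $x,y$ with $C$ does not lose optimality, and that replacing the leap-plus-shadow structure of $Q^\star$ by one of the two arcs $C_1,C_2$ genuinely does not increase weight — this is where conservativeness and Lemma~\ref{lem:T-min-pathlength} must be invoked precisely. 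Given that, every minimum-weight odd $(s,t)$-path's weight is matched by some candidate, and the algorithm is correct; the case where no odd $(s,t)$-path exists is handled since all \SPaPrOP{} and disjoint-path subroutines report infeasibility and $S$ stays $\emptyset$.
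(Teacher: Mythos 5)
Your part (a) and the no-leap case of part (b) are correct and follow the paper's argument essentially verbatim (the paper skips your replacement step in the no-leap case, since having no leaps already forces $Q^\star\cap T^-$ to equal $T^-[a,b]$, but the conclusion is the same). The gap is in the parity-changing-leap case, and it is exactly the step you flag as ``the main obstacle'': the bound you sketch, $w(S^\star)\le w(P_s)+w(P_t)+\min(w(C_1),w(C_2))$, is the wrong statement to aim for. The algorithm does not get to choose the lighter arc of $C$ --- it must take whichever of $S_1,S_2$ is \emph{odd}, and that may be the one containing the heavier arc. What must be proved is $\max\{w(S_1),w(S_2)\}\le w^\star$, and this is the technical heart of the lemma. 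Note also that $w(P'_s)\le w(P_s)$ need not hold, since the discarded tail $P_s\setminus P'_s$ may contain negative edges of $T^-$; so even the ``easy'' direction of your bookkeeping is not free.

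The paper establishes $\max\{w(S_1),w(S_2)\}\le w(P_s)+w(L)+w(P_t)\le w^\star$ by a three-way case analysis on where $x$ and $y$ sit on $C$: both on $T^-[a,b]$, one on $T^-[a,b]$ and one on $L$, or both on $L$. In each case the discarded pieces $P_s\setminus P'_s$ and $P_t\setminus P'_t$ (possibly concatenated with segments of $L$) are assembled into walks or closed walks whose endpoints lie on $T^-$, and then statement~(1) or~(2) of Lemma~\ref{lem:T-min-pathlength}, or Lemma~\ref{lem:closed-walk}, shows their weight dominates the weight of the tree segments that the heavier of $S_1,S_2$ is forced to traverse. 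For instance, in the case where $x,y$ both lie on $T^-[a,b]$, one uses that $P_s\setminus P'_s$ and $P_t\setminus P'_t$ are vertex-disjoint paths from $\{x,y\}$ to $\{a,b\}$, so by Lemma~\ref{lem:T-min-pathlength}(2) their total weight is at least $w(T^-[x,a]\cup T^-[b,y])$, which is precisely the extra tree weight carried by the arc containing $L$. Without carrying out this analysis (and its analogues in the other two configurations), the correctness of the second type of candidate path is not established, so the proof as written is incomplete.
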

\begin{proof}
First we show that any path that the algorithm sets as $S$ is an odd $(s,t)$-path. This is trivial for the paths on lines~\ref{algLine:def_path2odd} and~\ref{algLine:def_path2even}. Consider now the path~$S^\star$ on line~\ref{algLine:def_path3}. 
Clearly, the paths~$P'_s$ and~$P'_t$ are vertex-disjoint, since $P_s$ and~$P_t$ are vertex-disjoint by definition. 
Second, the vertex shared by $P'_s$ and~$C$ is~$x$, by the definition of~$x$, so the paths $P'_s$, $C_1$, and~$C_2$ are pairwise openly disjoint.
Similarly, by our choice of~$y$, the paths $P'_t$, $C_1$, and~$C_2$ are pairwise openly disjoint. Hence, $S_i=P'_s \cup C_i \cup P'_t$ is a path for both~$i=1,2$. Moreover, since $L$ is a parity-changing leap, we know that $|C_1|+|C_2|$ is odd. Therefore, exactly one of~$S_1$ and~$S_2$ is an odd path, and the algorithm considers it as~$S^\star$ on line~\ref{algLine:def_path3}.

Thus, to prove the correctness of our algorithm, it suffices to show that if there exists an odd $(s,t)$-path of weight~$w^\star$, then our algorithm picks an odd $(s,t)$-path of weight at most~$w^\star$. 

Let~$Q^\star$ be a shortest odd $(s,t)$-path with weight $w(Q^\star) = w^\star$ containing as few leaps as possible. By Lemma~\ref{lem:T-1parity-changing-leap}, 
either $Q^\star$ contains no leaps, or it contains a parity-changing leap.
See an illustration of the algorithm on Figure~\ref{fig:SOP_algoTREE}.

First assume that $Q^\star$ contains no leaps. Then, by Lemma~\ref{lem:shortest-path-simple} either (i) $Q^\star \cap T^- = \emptyset$, or (ii) $Q^\star \cap T^-$ is a path containing at least one edge. 
In case~(i), observe that $Q^\star$ is a path in $G-T^-$. Hence, it is an $(F_1,F_2)$-constrained path for every pair~$(F_1,F_2)$ of subsets of~$T^-$. In fact, for every choice of~$a$ and~$b$, 
$Q^\star$ is a solution for the two \SPaPrOP{} instances constructed on lines~\ref{algLine:inst1} and~\ref{algLine:inst2}. Consequently,
the algorithm will find an odd $(s,t)$-path with weight at most~$w^\star$ on both lines~\ref{algLine:inst1} and~\ref{algLine:inst2},
and will select the one with minimum weight in either line~\ref{algLine:def_path2odd}  or line~\ref{algLine:def_path2even}.
In case~(ii), let $a$ and $b$ be the endpoints of the path $Q^\star \cap T^-$, and consider the iteration of Algorithm~\ref{alg:SOP-tree} when the algorithm picks vertices~$a$ and~$b$ in the for-loop on line~\ref{algLine:for_loop}. Let us define 
$F^a_\even$ and $F^a_\odd$ as on lines~\ref{algLine:F_even} and~\ref{algLine:F_odd} of Algorithm~\ref{alg:SOP-tree}. 
Clearly, $Q^\star$ is either $(F^a_\even,F^a_\odd)$-constrained, or $(F^a_\odd,F^a_\even)$-constrained for a pair of vertices $a,b$ in $T$, moreover, it is a path in~$G'$ obtained by deleting all edges of~$T$ from~$G$ except those of $T^-[a,b]$. 
Hence, Algorithm~\ref{alg:SOP-tree} stores a path of weight at most~$w(Q^\star)=w^\star$ either on line~\ref{algLine:def_path2odd} or on line~\ref{algLine:def_path2even}.
This shows the correctness of the algorithm in the case when~$Q^\star$ contains no leaps.

Assume now that $Q^\star$ contains a parity-changing leap~$L^\star$. Let $a$ and $b$ denote the endpoints of~$L^\star$.
Let us consider the iteration of Algorithm~\ref{alg:SOP-tree} when the for-loop on line~\ref{algLine:for_loop} chooses $a$ and~$b$; 
we use the definitions on lines~\ref{algLine:leap}--\ref{algLine:def_path3}.
Note that $P_s$ and $P_t$ exist, since $Q^\star \setminus L$ decomposes into two vertex-disjoint paths from $\{s, t\}$ to $\{a, b\}$. 
Observe that $S_1 \triangle S_2=C$, and since~$L$ is parity-changing we have that $|C|$ is odd. Hence, exactly one of~$S_1$ and $S_2$ is an odd $(s,t)$-path.
We claim that $w(S_1) \leq w^\star$ and $w(S_2) \leq w^\star$.
Note that if our claim holds, then Algorithm~\ref{alg:SOP-tree} stores a path of weight at most $w^\star$ on line~\ref{algLine:def_path3}, and hence the correctness of the algorithm follows.
Therefore,  to show that Algorithm~\ref{alg:SOP-tree} is correct, it suffices to prove that
\begin{equation}
\label{eqn:claim-3cases}
\max\{w(S_1),w(S_2) \} \leq w^\star.
\end{equation}

Let $Q^\star_s$ and $Q^\star_t$ be the two vertex disjoint paths formed by $Q^\star \setminus L^\star$. Since $Q^\star_s$ and $Q^\star_t$ lead from $\{s,t\}$ to~$\{a,b\}$,
the definition of~$P_s$ and~$P_t$ implies $w(P_s)+w(P_t) \leq  w(Q^\star_s)+w(Q^\star_t)$.
Moreover, since $L$ is a minimum-weight parity-changing leap with endpoints~$a$ and~$b$, we also know 
\begin{equation}
\label{eqn:claim-twopaths}
w(P_s)+w(P_t)+w(L) \leq w(Q^\star_s)+w(Q^\star_t)+w(L^\star) = w(Q^\star)=w^\star.
\end{equation}   

We distinguish between three cases to prove inequality~(\ref{eqn:claim-3cases}).

{\bf Case A}: $x$ and $y$ are both on~$T^-[a,b]$. W.l.o.g., the vertices $a, x,y,b$ follow each other in this order on~$T^-[a,b]$, as otherwise we switch the names for~$a$ and~$b$. Note that $P_s \setminus P'_s$ and $P_t \setminus P'_t$ are two vertex-disjoint paths leading from $\{x,y\}$ to~$\{a,b\}$, so
by statement~(2) of Lemma~\ref{lem:T-min-pathlength}, we have 
\begin{equation}
\label{eqn:claim-caseA}
    w(P_s \setminus P'_s)+w(P_t \setminus P'_t) \geq w(T^-[x,a] \cup T^-[b,y]).
\end{equation}

Observe also that among the paths~$S_1$ and $S_2$, the one containing leap~$L$ has the larger weight since $w(T^-[x,y])<0$ and consequently $w(C \triangle T^-[x,y])>0$. Therefore, inequality~(\ref{eqn:claim-caseA}) together with  inequality~(\ref{eqn:claim-twopaths}) implies
\begin{align*}
    \max\{w(S_1),w(S_2)\}
    \, &= \, w(P'_s)+ w(T^-[x,a])+w(L)+w(T^-[b,y])+w(P'_t)  \\
    \, & \leq \, w(P'_s)+w(P_s \setminus P'_s) + w(L) +w(P_t \setminus P'_t) +w(P'_t) \\
    \, &= \, w(P_s)+w(L)+w(P_t) \\
    \, &\leq \, w^\star.
\end{align*}

{\bf Case B}: $x$ is on~$T^-[a,b]$, but $y$ is on~$L$. We may assume that $x \in V(P_s)$ and $y \in V(P_t)$, as otherwise we can switch the names of~$s$ and~$t$.
Next, we may also assume that $P_s$ ends at~$b$, and $P_t$ ends at~$a$, as otherwise we can switch the names of~$a$ and~$b$. 
Furthermore, we may even assume that $a \in V(S_1)$ 
and $b \in V(S_2)$, as otherwise we can switch the names of~$S_1$ and~$S_2$. 
This means that $P_s \setminus P'_s$ is an $(x,b)$-path, and $P_t \setminus P'_t$ is an $(a,y)$-path; recall that they are vertex-disjoint as well, by the definition of~$P_s$ and~$P_t$. Hence, $(P_s \setminus P'_s) \cup L[y,b] \cup (P_t \setminus P'_t)$ is an $(x,a)$-walk, and moreover, it does not contain any negative-weight edge more than once (recall that $L \cap T^-=\emptyset$, since~$L$ is a leap). By statement~(1) of Lemma~\ref{lem:T-min-pathlength}, this implies
\begin{equation*}
    w(P_s \setminus P'_s) + w(L[y,b]) + w(P_t \setminus P'_t) \geq w(T^-[x,a]).
\end{equation*}
From this, using also equation~(\ref{eqn:claim-twopaths}) we obtain that
\begin{align*}
w(S_1) \, &= \, w(P'_s)+w(T^-[x,a])+w(L[y,a])+w(P'_t) \\
\, &\leq \, w(P'_s)+w(P_s \setminus P'_s) + w(L[y,b]) + w(P_t \setminus P'_t)+w(L[y,a])+w(P'_t) \\
\, &= \, w(P_s)+w(L)+w(P_t) \\
\, &\leq \, w^\star. 
\end{align*}
To prove $w(S_2) \leq w^\star$ as well, we need to apply statement~(1) of Lemma~\ref{lem:T-min-pathlength} for the $(b,x)$-path $P_s \setminus P'_s$ to get 
\begin{equation}
\label{eqn:claim-caseB-s1}
   w(P_s \setminus P'_s) \geq w(T^-[b,x]).
\end{equation}
Furthermore, note that $(P_t \setminus P'_t) \cup L[y,a]$ is a closed walk that does not contain any edge of $T^-$ more than once, since $P_t \setminus P'_t$ is a path and $L \cap T^-=\emptyset$.
Thus, Lemma~\ref{lem:closed-walk} yields
\begin{equation}
\label{eqn:claim-caseB-s3}
   w(P_t \setminus P'_t) + w(L[y,a]) \geq 0.
\end{equation}
Putting inequalities (\ref{eqn:claim-twopaths}), (\ref{eqn:claim-caseB-s1}) and (\ref{eqn:claim-caseB-s3}) together we obtain
\begin{align*}
w(S_2) \, & = \, w(P'_s)+w(T^-[b,x])+w(L[y,b])+w(P'_t) \\
\, & \leq \, w(P'_s)+w(P_s \setminus P'_s) +w(L) - w(L[y,a])+ w(P'_t)\\
\, & \leq \, w(P_s)+w(L) +w(P_t \setminus P'_t)+w(P'_t)\\
\, & = \, w(P_s)+w(L)+w(P_t) \\
\, & \leq \, w^\star.
\end{align*}
This shows inequality~(\ref{eqn:claim-3cases}) for Case B.

{\bf Case C}: Both $x$ and $y$ are on~$L$. 
W.l.o.g., assume that the vertices $a,x,y,b$ follow each other in this order on the leap~$L$, as otherwise we can switch the names of~$x$ and~$y$. We may also assume that $S_1=P'_s \cup L[x,y] \cup P'_t$ while $S_2=S_1 \triangle C$, as otherwise we can switch the names of~$S_1$ and~$S_2$.
Note that $P_s \setminus P'_s$ and $P_t \setminus P'_t$ are two paths from $\{x,y\}$ to~$\{a,b\}$ and they are vertex-disjoint. Thus, their union together with $L[x,y]$ yields a walk from~$a$ to~$b$ that contains no negative-weight edge more than once. Statement~(1) of Lemma~\ref{lem:T-min-pathlength} then implies
\begin{equation*}
    w(P_s \setminus P'_s) + w(L[x,y]) + w(P_t \setminus P'_t) \geq w(T^-[a,b]).
\end{equation*}
Hence, using also inequality~(\ref{eqn:claim-twopaths}), we obtain
\begin{align*}
w(S_2) \,&=\, w(P'_s)+w(L[x,a)]+w(T^-[a,b]) + w(L[b,y])+w(P'_t) \\
\, & \leq \,  w(P'_s)+w(L[x,a)]+w(P_s \setminus P'_s) + w(L[x,y]) + w(P_t \setminus P'_t) + w(L[b,y])+w(P'_t) \\
\, &= \, w(P_s)+w(L)+w(P_t) \\
\, &\leq \, w^\star.
\end{align*}

To show $w(S_1) \leq w^\star$, observe that the paths $P_s \setminus P'_s$ and $P_t \setminus P'_t$ together with $L[a,x]$ and $L[b,y]$ form either one or two closed walks, depending on whether $P_s$ ends at~$a$ or at~$b$. Moreover, no edge of~$T^-$ appears more than once on these closed walks.  Therefore, Lemma~\ref{lem:closed-walk} yields
$$ w(P_s \setminus P'_s) + w(P_t \setminus P'_t)+w(L[a,x])+w(L[b,y]) \geq 0.$$
Thus, we obtain
\begin{align*}
w(S_1)\, &= \, w(P'_s)+w(L[x,y])+w(P'_t) \\
\, &\leq \, w(P'_s)+w(L[x,y])+w(P'_t) + 
w(P_s \setminus P'_s) + w(P_t \setminus P'_t)+w(L[a,x])+w(L[b,y]) \\
\, &= \, w(P_s)+w(L)+w(P_t) \\
\, &\leq  \, w^\star. \qedhere
\end{align*}
\end{proof}

\begin{thm}
\label{thm:T-SOP-inP}
\SOP{} can be solved in strongly polynomial time, 
if the weight function is conservative and the set of negative edges forms a tree.
\end{thm}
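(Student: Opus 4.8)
The plan is to derive the theorem as an immediate consequence of the correctness of Algorithm~\ref{alg:SOP-tree}, already established in Lemma~\ref{lem:algoTREE-correctness}, together with a running-time analysis of its constituent steps. Thus the only work remaining is to bound the time complexity and to check that every subroutine invoked is strongly polynomial.

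First I would observe that the outer for-loop on line~\ref{algLine:for_loop} iterates over all ordered pairs of distinct vertices of~$T^-$, hence over $\mathcal{O}(n^2)$ choices of~$(a,b)$. Within a single iteration, computing the tree path~$T^-[a,b]$ and partitioning it into $F^a_\even$ and $F^a_\odd$ according to the parity of the sequence numbers takes linear time, and constructing $G'=G-(T^-\setminus T^-[a,b])$ is likewise linear. The key point for the two \SPaPrOP{} calls on lines~\ref{algLine:inst1} and~\ref{algLine:inst2} is that the precondition of Theorem~\ref{thm:parity-constr-oddpath-inP} is met: the negative-weight edges of~$G'$ are precisely those of $T^-\cap E(G')=T^-[a,b]$, and by construction $T^-[a,b]=F^a_\even\cup F^a_\odd=F^a_\odd\cup F^a_\even$; moreover $w$ restricted to~$G'$ is still conservative. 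Hence each of these calls runs in $\mathcal{O}(mn+n^2\log n)$ time by Theorem~\ref{thm:parity-constr-oddpath-inP}.

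For the second type of paths, deciding whether a parity-changing leap with endpoints~$a$ and~$b$ exists and, if so, computing one of minimum weight is handled as in Remark~\ref{rem:min_parity-changing_leap}: it reduces to finding a minimum-weight $(a,b)$-path in $G-T^-$ whose parity differs from that of~$T^-[a,b]$, and since $w$ is non-negative on $G-T^-$, Theorem~\ref{thm:SOP_nonnegative_weight} applies in strongly polynomial time. Deciding whether two vertex-disjoint paths from~$\{s,t\}$ to~$\{a,b\}$ exist and computing a minimum-weight such pair is an instance of \DISP{} with conservative weights whose negative edges form a tree, which by Remark~\ref{rem:disjoint-paths} (invoking \cite{schlotter2023twopaths}) is solvable in polynomial time. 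The remaining operations of the iteration---forming $C=L\cup T^-[a,b]$, locating the vertices~$x$ and~$y$, cutting~$C$ into $C_1,C_2$, assembling $S_1$ and~$S_2$, and selecting the odd one---are all linear, and so is the final update of~$S$.

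Summing over the $\mathcal{O}(n^2)$ iterations of the for-loop yields a polynomial overall running time; it is in fact strongly polynomial, since every subroutine used (minimum-weight perfect matching via Gabow's algorithm, shortest odd paths with non-negative weights, and shortest two disjoint paths with a negative tree) is strongly polynomial, and the algorithm otherwise performs only arithmetic comparisons on the edge weights. Correctness is exactly the content of Lemma~\ref{lem:algoTREE-correctness}. I expect the only genuinely non-routine point to be the bookkeeping verification that $G'$ retains no negative-weight edge outside $F^a_\even\cup F^a_\odd$, so that Theorem~\ref{thm:parity-constr-oddpath-inP} is legitimately applicable in each iteration; everything else is a straightforward composition of previously established strongly-polynomial building blocks.
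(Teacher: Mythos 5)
Your proposal is correct and follows essentially the same route as the paper: correctness is delegated to Lemma~\ref{lem:algoTREE-correctness}, and the theorem reduces to observing that each of the $\mathcal{O}(n^2)$ iterations invokes only strongly polynomial subroutines (Theorem~\ref{thm:parity-constr-oddpath-inP} for the \SPaPrOP{} calls, Theorem~\ref{thm:SOP_nonnegative_weight} via Remark~\ref{rem:min_parity-changing_leap} for the leap, and the \DISP{} algorithm of Remark~\ref{rem:disjoint-paths}). Your explicit check that $G'$ has no negative edge outside $F^a_\even\cup F^a_\odd$ is a welcome detail that the paper leaves implicit.
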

\begin{proof}
By Lemma~\ref{lem:algoTREE-correctness} we know that Algorithm~\ref{alg:SOP-tree} is correct, so it remains to bound its running time. Let $n=|V|$ and $m=|E|$ denote the number of vertices and edges in the input graph~$G$.

Clearly, the {\bf for}-loop 
on line~\ref{algLine:for_loop} is iterated $\binom{n}{2}$ times. Performing lines~\ref{algLine:F_even}--\ref{algLine:def_path2even} can be done in $\mathcal{O}(mn+n^2 \log n)$ by Theorem~\ref{thm:parity-constr-oddpath-inP}.
For line~\ref{algLine:leap}, by Remark~\ref{rem:min_parity-changing_leap} we can obtain a parity-changing leap of minimum weight in strongly polynomial time, in fact, in time $\mathcal{O}(mn+n^2 \log n)$ \cite{schrijver-book}.

To perform line~\ref{algLine:vertex_disj_paths}, we rely on Remark~\ref{rem:disjoint-paths}. Lines~\ref{algLine:def_Ps}--\ref{algLine:def_path3} can be performed in $\mathcal{O}(n+m)$ time.
Hence, if solving an instance of the \DISP{} problem on $G$ on line~\ref{algLine:vertex_disj_paths} takes $F(n,m)$ time, then we obtain an overall running time of $\mathcal{O}(n^2 (mn + n^2 \log n + F(n, m)))$. 
Note that currently the only known upper bound on $F(n,m)$ is $F(n,m)=\mathcal{O}(n^9)$~\cite{schlotter2023twopaths},  
but probably this can be improved. 
\end{proof}

We remark that there is a strong connection between the \SOP{} and the \DISP{} problems. On the one hand, line~\ref{algLine:vertex_disj_paths} of our algorithm for \SOP{} relies on finding two vertex-disjoint paths from two source terminals to two sink terminals.
The need to use such an algorithm as a subroutine seems inevitable: if the input graph admits a unique parity-changing leap~$L$, then in order to find a shortest odd path between~$s$ and~$t$, we may need to find two vertex-disjoint paths that lead from~$s$ and~$t$ to two vertices of the odd cycle induced by~$L$. 

On the other hand, there is a very simple reduction from \DISP{} to \SOP{}. To see this, consider the equivalent formulation of \DISP{} where instead of two source and two sink terminals in~$G$, we are only given two terminals, $s$ and~$t$, and the task is to find two openly disjoint $(s,t)$-paths in~$G$ while minimizing their total weight. 
To solve this problem using an algorithm for~\SOP{}, we first make a copy $s'$ of~$s$ and a copy~$t'$ of~$t$, and connect the vertices adjacent to $s$ to $s'$, and the vertices adjacent to $t$ to $t'$. Then, subdivide each edge of this graph 
(halving the weights as well), 
and add a new edge~$tt'$ of weight~$0$. Let us denote by $G'$ this resulting graph.
We obtain our two $(s,t)$-paths by computing an odd $(s,s')$-path of minimum weight in graph~$G'$. Indeed, any pair of openly disjoint $(s,t)$-paths in~$G$ corresponds to an odd $(s,s')$-path of the same weight in~$G$, and vice versa.
This reduction shows that the connection between these two problems is bidirectional.

\section{FPT algorithms}
\label{sec:FPT}
In this section, we present two FPT algorithms for solving \SOP{} when the weight function is conservative. 
In Section~\ref{sec:FPT-w-negative-edges} we present two algorithms related to $E^-$  the set of edges of negative weight: the first one is regarding the cardinality of $E^-$ and the second one uses the size of the maximum matching in $G[E^-]$.
Later, in Section~\ref{sec:FPT-treewidth} we propose an FPT algorithm where the parameter is the treewidth of the input graph.

\subsection{Parametrization by negative edges}
\label{sec:FPT-w-negative-edges}
This section is divided in two. First, we show a simple FPT algorithm when the parameter is the number of negative edges. Then, we proceed to a more interesting parameter, namely the size of a maximum matching in the graph spanned by all negative-weight edges.

The key idea of the algorithms is to guess for each negative edge~$e$ whether $e$ has an even or odd sequence number on an optimal path. With this guess, we run an instance of the \SPaPrOP{} problem and obtain our solution. 

\subsubsection{The number of negative edges as parameter}
\label{sec:FPT-number-negative-edges}
We show that we can solve \SOP{} by an FPT algorithm with $|E^-|$ as the parameter.

\begin{thm}
\label{thm:fpt_by_negative_edges}
There is an FPT algorithm for \SOP{} with conservative weights when parameterized by the number of negative edges $|E^-|$, running in $\mathcal{O}\left(2^{|E^-|}(mn +n^2 \log n)\right)$ time.
\end{thm}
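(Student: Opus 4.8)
The plan is to reduce \SOP{} directly to $2^{|E^-|}$ instances of \SPaPrOP{}, exploiting the fact that once we decide, for every negative edge, whether it sits at an even or odd position along the sought path, Theorem~\ref{thm:parity-constr-oddpath-inP} applies because all negative edges then lie in $F_\even \cup F_\odd$.

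First I would observe that in an odd $(s,t)$-path $Q$, every edge~$e$ has a well-defined parity of its sequence number $\sq(e,Q)$ (as noted in Section~\ref{sec:preliminaries}, this does not depend on the chosen endpoint since $Q$ is odd). For each subset $F \subseteq E^-$, set $F_\odd \coloneqq F$ and $F_\even \coloneqq E^- \setminus F$. Then run the algorithm of Theorem~\ref{thm:parity-constr-oddpath-inP} on the instance $(G, w, s, t, F_\even, F_\odd)$; this is legitimate precisely because $F_\even \cup F_\odd = E^-$ contains all negative edges, and $w$ is conservative. Among all $2^{|E^-|}$ runs, return a path of minimum weight (and report that no odd $(s,t)$-path exists if none of the runs returns a path).

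For correctness I would argue both directions. Any $(F_\even,F_\odd)$-constrained odd $(s,t)$-path returned is in particular an odd $(s,t)$-path, so the output is always feasible (or $\varnothing$). Conversely, let $Q^\star$ be a minimum-weight odd $(s,t)$-path, and let $F^\star \coloneqq \{ e \in E^- : \sq(e,Q^\star) \text{ is odd} \}$. Then $Q^\star$ is $(E^- \setminus F^\star, F^\star)$-constrained by construction, so in the iteration with $F = F^\star$ the call to \SPaPrOP{} returns a path of weight at most $w(Q^\star)$; hence the overall minimum taken by the algorithm is at most $w(Q^\star)$, as required. If no odd $(s,t)$-path exists, then in particular no $(F_\even,F_\odd)$-constrained one exists for any split, so every call returns nothing and the algorithm correctly reports $\varnothing$.

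Finally, the running time: there are $2^{|E^-|}$ iterations, and each invokes Theorem~\ref{thm:parity-constr-oddpath-inP} once, costing $\mathcal{O}(mn + n^2 \log n)$; constructing each instance and comparing weights is dominated by this. This yields the claimed $\mathcal{O}\!\left(2^{|E^-|}(mn + n^2 \log n)\right)$ bound, so the algorithm is FPT with parameter $|E^-|$. There is no real obstacle here — the only thing to be slightly careful about is the uniform treatment of the "no path exists" case and the remark that the parity of $\sq(e,Q^\star)$ is endpoint-independent, both of which are immediate from the definitions already set up in Sections~\ref{sec:preliminaries} and~\ref{sec:parity}.
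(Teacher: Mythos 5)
Your proposal is correct and follows essentially the same route as the paper's proof: enumerate all $2^{|E^-|}$ partitions of $E^-$ into $(F_\even,F_\odd)$, solve each resulting \SPaPrOP{} instance via Theorem~\ref{thm:parity-constr-oddpath-inP}, and return the best path found. The correctness argument (feasibility of any output plus the existence of the correct guess matching an optimal $Q^\star$) and the running-time bound match the paper's reasoning.
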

\begin{proof}
Let $(G,w,s,t)$ be our input for \SOP{} with $G=(V,E)$, such that $E^- \subseteq E$ denotes the set of all edges with negative weight. 
 We guess a partition $(E^-_{\even},E^-_{\odd})$ of~$E^-$ 
 so that $E^-_{\even}$ and $E^-_{\odd}$ contains all negative edges with even and odd sequence number in an optimal solution for~$(G,w,s,t)$, respectively. 
 Then we solve the \SPaPrOP{} problem with instance $(G,w,s,t, E^-_{\even}, E^-_{\odd})$ using Theorem~\ref{thm:parity-constr-oddpath-inP}, and return the path $Q$ obtained. 

 To check the correctness of this algorithm, let $Q^\star$ be an optimal solution for \SOP{}. 
  As $Q$ is an odd $(s, t)$-path, $w(Q) \geq w(Q^\star)$. Conversely, if our guesses are correct, then $Q^\star$ is an $(E^-_{\even},E^-_{\odd})$-constrained odd $(s,t)$-path, and thus $w(Q) \leq w(Q^\star)$. 
  
  For the running time, it is clear that there are $2^{|E^-|}$ possible guesses, and getting a solution for the  \SPaPrOP{} problem can be done in $\mathcal{O}(mn+n^2 \log n)$ time by Theorem~\ref{thm:parity-constr-oddpath-inP}. Hence, the total running time is as stated.
\end{proof}

\subsubsection{The size of a maximum matching on the negative edges as parameter}
\label{sec:FPT-matching-negative-edges}

Let us now generalize the algorithm of Theorem~\ref{thm:fpt_by_negative_edges}. 
Let $\mu(E^-)$ be the size of a maximum matching in the graph $G[E^-]$. 
Note that $\mu(E^-) \leq |E^-|$ is trivial, and it is also clear that $|E^-|$ can be arbitrarily large for any fixed value of~$\mu(E^-)$; to see this, consider the case when $E^-$ is a collection of $\mu(E^-)$ stars of arbitrary sizes. Thus, the parameter~$\mu(E^-)$ is substantially weaker than the parameter~$|E^-|$.
In the following, $\mu(E^-)$ is simply denoted by $\mu$. 

\begin{thm}
\label{thm:randomized_by_negative_matching}
There is a randomized FPT algorithm for \SOP{} with conservative weights that returns an optimal solution to the problem with probability at least $1 - e^{-1} > 0.632$ in $\mathcal{O}\left(2^{2\mu}(mn + n^2 \log n)\right)$ time. 
\end{thm}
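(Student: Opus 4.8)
The plan is to reduce the problem to an application of Theorem~\ref{thm:fpt_by_negative_edges} combined with a color-coding argument, guessing not the parity of \emph{every} negative edge but only the parities of those negative edges that actually appear on an optimal path. Let $Q^\star$ be an optimal solution; since $Q^\star$ is a path, the negative edges it uses form a matching in $G[E^-]$ (any two edges sharing a vertex would give that vertex degree $\geq 2$ on the path, which is allowed only at an endpoint — more carefully, a path can use at most two edges at a single inner vertex, but those two cannot both be negative unless... actually we must be slightly careful here, so the first step is to argue that $E(Q^\star)\cap E^-$ has size at most $2\mu$ or can be treated as such). The cleaner route, which I expect the authors take: the set $E(Q^\star)\cap E^-$ is a set of edges forming a subgraph of maximum degree $2$ inside $G[E^-]$ that is moreover a disjoint union of paths, hence has a vertex cover of size at most $2\mu$, or directly its own matching number is at most $\mu$, so it meets at most $2\mu$ vertices of $G[E^-]$.

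First I would fix a random $2$-coloring of $V(G[E^-])$, i.e., assign each vertex of $V(G[E^-])$ independently a color in $\{\textsf{L},\textsf{R}\}$ uniformly at random; equivalently, use the universal set of Theorem~\ref{thm:derandom_universal_set} later for derandomization. The intended use of the coloring is to "separate" the negative edges touched by $Q^\star$: we only keep a negative edge $uv\in E^-$ as \emph{active} if its endpoints receive distinct colors, say with $u$ colored $\textsf{L}$ and $v$ colored $\textsf{R}$ (orienting the bipartition). With probability $2^{-2\mu}$ — since $Q^\star$ touches at most $2\mu$ vertices of $G[E^-]$, and we need those $\leq 2\mu$ vertices to be colored according to one fixed good pattern — the coloring is "successful": every negative edge on $Q^\star$ becomes active and receives a consistent orientation. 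Given a successful coloring, I would then guess, for each of the (at most $2\mu$) colored vertices touched by $Q^\star$... hmm, this needs more thought; more likely the mechanism is: after coloring, contract or otherwise reduce so that only $O(\mu)$ "effective" negative edges remain, then invoke the bounded-$|E^-|$ algorithm. Concretely: for each pattern of which colored vertices lie on the path is too much; instead, guess for each of the $2\mu$-ish relevant structural choices a parity bit, yielding $2^{O(\mu)}$ guesses, feed the resulting constrained instance to \SPaPrOP{} via Theorem~\ref{thm:parity-constr-oddpath-inP} (note all negative edges are constrained in the reduced instance), and take the best path found over all colorings and guesses.

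The main obstacle — and the step I would spend the most care on — is making precise \emph{why} a random $2$-coloring lets us get away with only $2^{O(\mu)}$ rather than $2^{|E^-|}$ guesses: the point is that after coloring, within each star of $E^-$ at most... no. The real content is: a successful coloring ensures the negative edges used by $Q^\star$ are "captured," and then one guesses parities only for a bounded-size set of edges determined by the coloring (one per color-class-pair incident structure), so that the remaining negative edges can be deleted or forced to be unused. Verifying that forcing the non-captured negative edges to be unused does not destroy $Q^\star$ is exactly what the successful-coloring event buys us. Correctness then follows as in Theorem~\ref{thm:fpt_by_negative_edges}: the returned $Q$ is always a feasible odd $(s,t)$-path so $w(Q)\geq w(Q^\star)$, and conditioned on a successful coloring and correct parity guesses, $Q^\star$ itself is feasible for the \SPaPrOP{} instance so $w(Q)\leq w(Q^\star)$. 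The success probability of a single random coloring is at least $2^{-2\mu}$, so repeating $2^{2\mu}$ independent colorings boosts the failure probability to at most $(1-2^{-2\mu})^{2^{2\mu}}\leq e^{-1}$, giving success probability at least $1-e^{-1}>0.632$. Each of the $2^{2\mu}$ colorings triggers $2^{O(\mu)}$ parity guesses, each resolved by one \SPaPrOP{} call costing $\mathcal{O}(mn+n^2\log n)$ by Theorem~\ref{thm:parity-constr-oddpath-inP}; absorbing the $2^{O(\mu)}$ guess factor into the stated $2^{2\mu}$ (or adjusting constants) yields the claimed running time $\mathcal{O}\!\left(2^{2\mu}(mn+n^2\log n)\right)$.
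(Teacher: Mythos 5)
There is a genuine gap. Your high-level intuition is the right one --- only the negative edges actually on $Q^\star$ need to be classified correctly, and there are at most $2\mu$ of them --- but the mechanism you propose for exploiting this (a random $2$-coloring of $V(G[E^-])$, ``activating'' bichromatic edges, followed by a separate layer of $2^{O(\mu)}$ parity guesses) is never made concrete, and you flag this yourself several times. The paper's proof needs no color-coding at all: one simply assigns each edge of $E^-$ independently and uniformly to $E^-_\even$ or $E^-_\odd$ and calls \SPaPrOP{} on $(G,w,s,t,E^-_\even,E^-_\odd)$. Since the \emph{entire} set $E^-$ is partitioned into $F_\even\cup F_\odd$, Theorem~\ref{thm:parity-constr-oddpath-inP} applies, and --- this is the point your write-up never states cleanly --- a wrong parity assignment to a negative edge \emph{not} on $Q^\star$ is harmless, because the $(F_\even,F_\odd)$-constraint is vacuous for edges outside the path; there is no need to ``force the non-captured negative edges to be unused.'' The success event is simply that the at most $2\mu$ negative edges on $Q^\star$ receive their true parities, which has probability at least $2^{-2\mu}$; your repetition/boosting computation and the two-sided correctness argument ($w(Q)\ge w(Q^\star)$ always, $w(Q)\le w(Q^\star)$ on success) are then exactly as in the paper.

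Two further specific errors. First, your opening claim that ``the negative edges $Q^\star$ uses form a matching in $G[E^-]$'' is false (two consecutive path edges can both be negative); the correct statement, and the one the paper uses to get the $2\mu$ bound, is that the negative edges of $Q^\star$ with \emph{even} sequence number form a matching and those with \emph{odd} sequence number form a matching, each hence of size at most $\mu$. (Your fallback claim that $Q^\star\cap E^-$ ``meets at most $2\mu$ vertices of $G[E^-]$'' is also not quite right --- a disjoint union of $\mu$ two-edge paths has matching number $\mu$ but $3\mu$ vertices --- though it is not needed.) Second, your running-time accounting multiplies $2^{2\mu}$ colorings by an extra $2^{O(\mu)}$ guesses per coloring and then asks to ``absorb'' that factor; this does not yield the stated bound $\mathcal{O}\bigl(2^{2\mu}(mn+n^2\log n)\bigr)$, whereas the paper's algorithm makes exactly $T=2^{2\mu}$ iterations with one \SPaPrOP{} call each.
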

\begin{proof}
Let $(G,w,s,t)$ be an instance of the \SOP{} problem  with conservative weights. Let $E^- \subseteq E$ denote the set of all edges with negative weight. 
We would like to get an instance of the \SPaPrOP{} problem to solve our problem. Again, as in Theorem~\ref{thm:fpt_by_negative_edges}, we guess a random partition $(E^-_{\even},E^-_{\odd})$ of~$E^-$ by placing each edge~$e \in E^-$ into $E^-_{\even}$ with probability $\frac{1}{2}$, and setting $E^-_{\odd} = E^- \setminus E^-_{\even}$.
Then we obtain an optimal solution $Q$ for the \SPaPrOP{} problem with instance $(G,w,s,t, E^-_{\even}, E^-_{\odd})$ using Theorem~\ref{thm:parity-constr-oddpath-inP}. 
We repeat this procedure $T$ times and pick the path~$Q$ with the minimum weight, where the value of $T$ is chosen later. 

The key observation is that if $Q^\star$ is an optimal solution of the \SOP{} problem, then the set~$Q^\star_\even \subseteq E^-$ of edges with an even sequence number in $Q^\star$ form a matching in~$G$, and similarly, the set~$Q^\star_\odd  \subseteq E^-$ of edges with an odd sequence number in $Q^\star$ form a matching as well. This implies $|Q^\star_\even| \leq \mu$ and $|Q^\star_\odd| \leq \mu$, so $Q^\star$ contains at most $2\mu$ negative-weight edges.
Hence, we need to guess at most $2\mu$ edges of~$E^-$ correctly.

The probability of getting a guess where $Q^\star_\even \subseteq E^-_{\even}$ is $\left(\frac{1}{2}\right)^{\mu}$. 
Similarly, the probability of getting a guess where $Q^\star_\odd \subseteq E^-_{\odd}$ is $\left(\frac{1}{2}\right)^{\mu}$. 
Since there might be several optimal paths, the probability of getting a good set $E^-_{\even}$ is at least $\left(\frac{1}{2}\right)^{2\mu}$. 
Thus, we return a wrong path $Q$ with probability at most
\begin{align*}
    \left(1- \left(\frac{1}{2}\right)^{2\mu}  \right)^T 
    \leq  \left(  e^{- \left(\frac{1}{2}\right)^{2\mu}} \right)^T, 
\end{align*}
where we use the inequality $1 + x \leq e^x$. This implies that we succeed in returning a minimum-weight path with probability $1 - \left(  e^{- \left(\frac{1}{2}\right)^{2\mu} \cdot T} \right) $. Setting $T=2^{2\mu}$ as the number of repetitions, our algorithm succeeds with probability at least $1 - e^{-1} > 0.632$.

The running time depends on the time we need to compute each path $Q$. By Theorem~\ref{thm:parity-constr-oddpath-inP} and the value of $T$, we obtain  $\mathcal{O}\left(2^{2\mu}(mn + n^2 \log n)\right)$ as overall running time.
\end{proof}

It is straightforward to derandomize the algorithm using a universal set.

\begin{thm}
\label{thm:fpt_by_negative_matching}
There is a deterministic FPT algorithm for \SOP{} with conservative weights 
when parameterized by $\mu$, running in $\mathcal{O}\left(f(\mu) \cdot (mn \log n + 
n^2 (\log n)^2)\right)$ time, where $f(\mu) = 2^{2\mu} \cdot \mu^{\mathcal{O}(\log \mu)}$.
\end{thm}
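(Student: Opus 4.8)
The plan is to derandomize the algorithm from Theorem~\ref{thm:randomized_by_negative_matching} by replacing the sequence of independent random partitions with a carefully chosen deterministic family of partitions coming from a universal set. First I would index the edges of~$E^-$ by $\{1,2,\dots,|E^-|\}$ (note $|E^-|\leq m$, so $n=\mathcal{O}(m)$ and we may write $|E^-|$ in place of~$n$ inside the universal-set bound up to constants). The key structural fact, already isolated in the proof of Theorem~\ref{thm:randomized_by_negative_matching}, is that for any fixed optimal solution $Q^\star$ the set $S^\star := Q^\star_\even \cup Q^\star_\odd \subseteq E^-$ of negative edges used by $Q^\star$ has size at most $2\mu$. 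Thus it suffices to have, for \emph{every} possible $2\mu$-element subset $S$ of $E^-$, at least one partition in our family that separates $S$ in the ``correct'' way — i.e.\ that realizes the particular split of $S$ into (even-position edges, odd-position edges) dictated by $Q^\star$.

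The main step is then to invoke Theorem~\ref{thm:derandom_universal_set} with parameters $(|E^-|, 2\mu)$ to construct an $(|E^-|,2\mu)$-universal set $\mathcal{U}$ of size $|\mathcal{U}| = 2^{2\mu}\cdot (2\mu)^{\mathcal{O}(\log \mu)}\cdot \log |E^-| = 2^{2\mu}\cdot \mu^{\mathcal{O}(\log\mu)}\cdot\log n$ in time $\mathcal{O}(|E^-|\cdot|\mathcal{U}|)$. For each set $A\in\mathcal{U}$ we form the partition $(E^-_\even, E^-_\odd) := (A, E^-\setminus A)$ and solve the \SPaPrOP{} instance $(G,w,s,t,E^-_\even,E^-_\odd)$ using Theorem~\ref{thm:parity-constr-oddpath-inP}, keeping the minimum-weight odd $(s,t)$-path found over all $A\in\mathcal{U}$. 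Correctness: let $Q^\star$ be an optimal \SOP{} solution and let $S^\star=Q^\star_\even\cup Q^\star_\odd$, a set of at most $2\mu$ negative edges. Extend $S^\star$ arbitrarily (if needed) to a set $S$ of size exactly $2\mu$. By the universal-set property, $\{A\cap S: A\in\mathcal{U}\}$ contains all subsets of $S$, in particular it contains $Q^\star_\even\cap S = Q^\star_\even$; hence some $A\in\mathcal{U}$ satisfies $A\cap S^\star = Q^\star_\even$, so $Q^\star_\even\subseteq A=E^-_\even$ and $Q^\star_\odd\subseteq E^-\setminus A=E^-_\odd$ for that~$A$. For this partition, $Q^\star$ is an $(E^-_\even,E^-_\odd)$-constrained odd $(s,t)$-path, so the \SPaPrOP{} call returns a path of weight at most $w(Q^\star)$; conversely every path we ever return is an odd $(s,t)$-path, hence has weight at least $w(Q^\star)$. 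Therefore the algorithm outputs an optimal solution.

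For the running time, the universal set is built in $\mathcal{O}(|E^-|\cdot|\mathcal{U}|) = \mathcal{O}(n\cdot 2^{2\mu}\mu^{\mathcal{O}(\log\mu)}\log n)$ time, which is dominated by the subsequent calls, and we make $|\mathcal{U}| = 2^{2\mu}\mu^{\mathcal{O}(\log\mu)}\log n$ calls to the \SPaPrOP{} solver, each costing $\mathcal{O}(mn+n^2\log n)$ by Theorem~\ref{thm:parity-constr-oddpath-inP}. Multiplying, the total is $\mathcal{O}\bigl(2^{2\mu}\mu^{\mathcal{O}(\log\mu)}\log n\cdot(mn+n^2\log n)\bigr) = \mathcal{O}\bigl(f(\mu)\cdot(mn\log n + n^2(\log n)^2)\bigr)$ with $f(\mu)=2^{2\mu}\cdot\mu^{\mathcal{O}(\log\mu)}$, as claimed. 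I expect no genuine obstacle here: the only mild subtlety is matching the exact form of $f(\mu)$ and the extra $\log n$ factor to the statement, which is purely bookkeeping — one has to absorb the $\log|E^-|\leq\log n$ from $|\mathcal{U}|$ into the per-call cost to get the stated $mn\log n + n^2(\log n)^2$ form, and fold $(2\mu)^{\mathcal{O}(\log\mu)}$ into $\mu^{\mathcal{O}(\log\mu)}$.
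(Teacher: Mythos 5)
Your proposal is correct and follows essentially the same route as the paper: replace the random partition of Theorem~\ref{thm:randomized_by_negative_matching} by an $(\cdot,2\mu)$-universal set from Theorem~\ref{thm:derandom_universal_set}, solve one \SPaPrOP{} instance per set, and keep the best path. The only (harmless) differences are that you build the universal set over the index set $\{1,\dots,|E^-|\}$ rather than over $\{1,\dots,n\}$, and that you spell out the padding of $S^\star$ to size exactly $2\mu$, a detail the paper's proof leaves implicit.
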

\begin{proof}
    Let $(G,w,s,t)$ be our instance of \SOP{}; we use the notation in the proof of Theorem~\ref{thm:randomized_by_negative_matching}.
    By Theorem~\ref{thm:derandom_universal_set}, one can construct an $(n,2\mu)$-universal set family~$\mathcal{U}$ of size $|\mathcal{U}|=2^{2\mu} \cdot \mu^{\mathcal{O}(\log \mu)} \cdot \log n$ 
    in time $\mathcal{O}(n \cdot|\mathcal{U}|)$.
    Hence, instead of guessing $E^-_\even$ randomly as in Theorem~\ref{thm:randomized_by_negative_matching}, 
    we can try all sets~$U \in \mathcal{U}$, and define $E^-_\even=U \cap E^-$. By the definition of an $(n,2\mu)$-universal set family we obtain that $\mathcal{U}$ contains at least one set~$U^\star$ such that $Q^\star_\even \subseteq U^\star \cap E^-$ and $Q^\star_\odd \cap U^\star=\emptyset$, which suffices for our algorithm to produce a correct output. 
    The total running time is $\mathcal{O}\left(|\mathcal{U}|\cdot (mn +n^2 \log n)\right)$.
\end{proof}

\subsection{Parametrization by treewidth}
\label{sec:FPT-treewidth}
In this section we show that \SOP{} can be solved in linear time on any graph whose treewidth is bounded by a constant. The main idea is to show that this problem can be described by monadic second-order formulas.
\begin{thm}
\label{thm:bounded-tw}
For each fixed integer~$k$, the \SOP{} problem  
can be solved in linear time on graphs with treewidth at most~$k$.
\end{thm}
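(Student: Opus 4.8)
## Proof Plan for Theorem~\ref{thm:bounded-tw}

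\begin{proof}[Proof plan]
The plan is to express \SOP{} as a \emph{linear EMS extremum problem} in the sense of Arnborg, Lagergren and Seese and then invoke Theorem~\ref{thm:monadic_treewidth}. Concretely, we take the evaluation function $f$ to be the weight function $w$ itself, so that $f(P)=\sum_{e\in P}w(e)=w(P)$, and we seek to minimize $f(P)$ over all edge sets $P\subseteq E$ for which $G$ satisfies a monadic second-order formula $\varphi(P)$ (with $P$ a free edge-set variable, and with $s$ and $t$ interpreted as fixed vertices) stating that $P$ is the edge set of an odd $(s,t)$-path. Two things then remain: (a) write down such a formula $\varphi$, and (b) supply a tree-decomposition of bounded width, which exists since the treewidth of $G$ is at most the fixed constant~$k$.

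For (a), we build $\varphi(P)$ from two conjuncts. The first conjunct, $\texttt{Path}_{s,t}(P)$, says that $P$ is the edge set of a simple $(s,t)$-path: it asserts $s\neq t$, that $s$ and $t$ are each incident to exactly one edge of $P$, that every other vertex is incident to either zero or two edges of $P$ (both conditions are routine to write using the relation $\texttt{Inc}(e,v)$ and first-order quantification over edges of~$P$), and that the subgraph spanned by $P$ is connected (expressed in the standard way: there is no vertex set $A$ that contains some but not all vertices incident to~$P$ while no edge of~$P$ crosses~$A$). Under these conditions every component of $(V,P)$ is a path whose degree-one vertices are its endpoints or a cycle with all degrees two, so connectivity together with ``$s,t$ are the only degree-one vertices'' forces $P$ to be exactly a simple path from~$s$ to~$t$. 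The second conjunct encodes \emph{odd length}. Since a path is bipartite and connected, it has a proper $2$-colouring that is unique up to swapping the two colour classes, and $s,t$ lie in different classes iff the path has odd length. Hence we write
\[
\exists V_1,V_2:\ \bigl(\forall v:(v\in V_1\vee v\in V_2)\wedge\neg(v\in V_1\wedge v\in V_2)\bigr)\ \wedge\ s\in V_1\ \wedge\ t\in V_2\ \wedge\ \psi(P,V_1,V_2),
\]
where $\psi(P,V_1,V_2)$ states that every edge $e\in P$ has one endpoint in $V_1$ and the other in $V_2$, i.e.\ $\forall e\in P\,\forall u,v:\bigl(\texttt{Inc}(e,u)\wedge\texttt{Inc}(e,v)\wedge u\neq v\bigr)\Rightarrow\bigl((u\in V_1\wedge v\in V_2)\vee(u\in V_2\wedge v\in V_1)\bigr)$. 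Combined with $\texttt{Path}_{s,t}(P)$, the existence of such $V_1,V_2$ is equivalent to $P$ being an odd $(s,t)$-path; so $\varphi(P)\equiv\texttt{Path}_{s,t}(P)\wedge(\text{the displayed formula})$ is the required monadic second-order formula, and $\{P\subseteq E: G\models\varphi(P)\}$ is precisely the set of odd $(s,t)$-paths (if it is empty, the extremum problem simply reports that no odd $(s,t)$-path exists).

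For (b), since $k$ is a fixed constant we may compute a tree-decomposition of $G$ of width $\mathcal{O}(k)=\mathcal{O}(1)$ in linear time: either Bodlaender's algorithm~\cite{bodlaender-linear-treewidth} in $2^{\mathcal{O}(k^3)}n$ time, producing an optimal-width decomposition, or Theorem~\ref{thm:construct-treewidth}, producing one of width at most $2k+1$ in $2^{\mathcal{O}(k)}n$ time. Feeding this decomposition together with $\varphi(P)$ and $f=w$ into Theorem~\ref{thm:monadic_treewidth} solves the corresponding linear EMS extremum problem in linear time, and the optimal $P$ is a minimum-weight odd $(s,t)$-path; the total running time is linear in $|V|+|E|$ for each fixed~$k$. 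The only genuinely delicate point is the parity constraint: monadic second-order logic cannot count the cardinality of an edge set modulo~$2$ directly, and the proper-$2$-colouring trick above is precisely what circumvents this; everything else (encoding ``simple $(s,t)$-path'' and incorporating edge weights via the EMS machinery) is standard. Note also that conservativeness of $w$ is not needed here, since the optimization ranges over a finite family of edge sets.
\end{proof}
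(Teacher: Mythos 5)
Your proposal is correct and follows the same overall strategy as the paper: cast \SOP{} as a linear EMS extremum problem with evaluation function $w$, supply a bounded-width tree decomposition via Theorem~\ref{thm:construct-treewidth}, and invoke Theorem~\ref{thm:monadic_treewidth}. The only genuine difference lies in how the monadic second-order formula for ``odd $(s,t)$-path'' is built, which is the content of the paper's Lemma~\ref{lem:MSO-formula}. The paper asserts that $P$ is connected and can be partitioned into two matchings $P_1,P_2$ with $s$ and $t$ each having degree~$1$ in $P_1$ and degree~$0$ in $P_2$; since a connected graph splits into two matchings only if it is a path or a cycle, and the degree conditions force a path whose first and last edges both lie in $P_1$, the edges alternate $P_1,P_2,\dots,P_1$ and oddness comes for free from $|P_1|=|P_2|+1$. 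You instead encode path-ness by vertex-degree constraints (degree~$1$ at $s,t$, degree~$0$ or~$2$ elsewhere) plus connectivity, and handle parity separately via a proper $2$-colouring $(V_1,V_2)$ with $s\in V_1$, $t\in V_2$, and every edge of $P$ crossing between the classes; since a path is connected and bipartite, such a colouring restricted to $V(P)$ is unique up to swapping, and it separates the endpoints exactly when the length is odd. Both encodings are valid MSO$_2$ and both circumvent the fact that plain MSO cannot count $|P|$ modulo~$2$; the paper's matching trick packages path-ness and parity into a single edge-set partition, while your vertex-colouring trick is the more standard textbook device for distance parity and cleanly separates the two concerns. Your closing observations---that conservativeness of $w$ is not needed here, and that an empty solution family is detected by the EMS machinery---are accurate and match the paper.
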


Theorem~\ref{thm:bounded-tw} can be proved by using Theorem~\ref{thm:monadic_treewidth}. 
For this, we need the following key lemma.

\begin{lem}
\label{lem:MSO-formula}
There exists a monadic second-order formula~\textup{$\texttt{OddPath}_{s,t}(P)$} defined on a graph~$G$ containing two fixed vertices~$s$ and~$t$ which holds for a set~$P$ of edges in~$G$ if and only if $P$ is an odd $(s,t)$-path in~$G$. 
\end{lem}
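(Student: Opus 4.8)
The plan is to express the property ``$P$ is an odd $(s,t)$-path'' as a conjunction of several MSO-expressible sub-properties. First I would write a formula $\texttt{Path}_{s,t}(P)$ asserting that $P$ is the edge set of an $(s,t)$-path. The standard way to do this is to state that the subgraph $(V(P),P)$ is connected, that $s$ and $t$ each have degree exactly one in $P$, and that every other vertex incident to an edge of $P$ has degree exactly two in $P$; together these conditions force $P$ to be a simple path with endpoints $s$ and $t$. Degree constraints like ``$v$ is incident to exactly one edge of $P$'' are first-order expressible once we can quantify over edges: $\exists e (e\in P \wedge \texttt{Inc}(e,v)) \wedge \neg \exists e_1, e_2 (e_1\in P \wedge e_2 \in P \wedge e_1\neq e_2 \wedge \texttt{Inc}(e_1,v)\wedge \texttt{Inc}(e_2,v))$, and ``exactly two'' is analogous. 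Connectivity of $(V(P),P)$ is the only genuinely second-order ingredient: I would use the usual trick that a vertex set/edge set induces a connected subgraph iff it cannot be split into two nonempty parts with no $P$-edge crossing, i.e.\ $\forall X \big( (\exists u (u\in V(P)\wedge u\in X) \wedge \exists u (u\in V(P) \wedge u\notin X)) \Rightarrow \exists e (e\in P \wedge \exists u,v (\texttt{Inc}(e,u)\wedge \texttt{Inc}(e,v)\wedge u\in X \wedge v\notin X)) \big)$, where $V(P)$ ``$v\in V(P)$'' abbreviates ``$\exists e(e\in P\wedge\texttt{Inc}(e,v))$''.

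Next I would handle the parity condition. Given that $P$ is already known to be a path, ``$|P|$ is odd'' is equivalent to the existence of a 2-coloring of the edges of $P$ that properly alternates along the path with both endpoint-edges receiving the colour, say, ``$s$-coloured'' at $s$ and ``$t$-coloured'' at $t$, with colours forced to alternate at every internal vertex, and then asserting the two endpoint edges lie in opposite colour classes --- actually the cleanest formulation is: there exist disjoint edge sets $P_1,P_2$ partitioning $P$ such that no vertex is incident to two edges of the same class (so the colouring alternates), and the unique $P$-edge at $s$ is in $P_1$ while the unique $P$-edge at $t$ is in $P_2$. Since alternation along a path forces $P_1$ to contain exactly the odd-sequence-number edges (counting from $s$) and $P_2$ the even ones, having the edge at $t$ in the class \emph{different} from the class at $s$ is exactly the statement that $|P|$ is odd. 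All of this is MSO: two set quantifiers and first-order incidence/degree conditions. Then $\texttt{OddPath}_{s,t}(P) \equiv \texttt{Path}_{s,t}(P) \wedge (\text{parity clause})$.

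I should double-check the degenerate cases: if $s=t$ the property should presumably fail (no odd path from a vertex to itself, or we may just assume $s\neq t$ as a standing hypothesis, which the rest of the paper does); if $P$ has a single edge $st$ the degree conditions are satisfied with $s,t$ of degree one and no other vertices involved, connectivity holds, and the parity clause is satisfied with $P_1=\{st\}$, $P_2=\emptyset$ --- good, this is the odd path of length one. I would also make sure that the ``exactly two'' degree condition is stated only for vertices in $V(P)\setminus\{s,t\}$, using the guard ``$v\neq s \wedge v\neq t$''.

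The main obstacle is purely expository rather than mathematical: writing the connectivity subformula and the alternating-colouring subformula carefully enough that they are literally MSO formulas over the signature $\{=,\texttt{Adj},\texttt{Inc},\in\}$ with $s,t$ as free/constant symbols, and verifying each conjunct really does capture the intended combinatorial condition (in particular that the degree-1/degree-2 conditions plus connectivity force \emph{simplicity} of the path, which they do because a connected graph with two degree-1 vertices and all others of degree 2 is exactly a path). Once $\texttt{OddPath}_{s,t}(P)$ is in hand, Theorem~\ref{thm:bounded-tw} follows by feeding it, together with the evaluation function $f=w$, into the linear EMS extremum machinery of Theorem~\ref{thm:monadic_treewidth}, after first obtaining a bounded-width tree decomposition via Theorem~\ref{thm:construct-treewidth}.
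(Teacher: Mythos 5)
Your overall strategy is the same as the paper's: express ``path'' via connectivity (the partition trick) plus degree conditions, and express parity via an alternating $2$-colouring $P_1,P_2$ of the edges, i.e.\ a partition of $P$ into two matchings. The paper packages the degree conditions slightly differently (it uses the fact that a connected graph partitionable into two matchings is a path or a cycle, and then pins down the endpoints with $\texttt{Deg1}(P_1,\cdot)$ and $\texttt{Deg0}(P_2,\cdot)$ at $s$ and $t$), whereas you first force $P$ to be an $(s,t)$-path with explicit degree-$1$/degree-$2$ constraints and only then impose the colouring; both routes are fine.

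However, your parity clause is inverted, and as written the formula characterizes \emph{even} $(s,t)$-paths. If the edges of $P$ are $e_1,\dots,e_\ell$ in order from $s$ and the colouring alternates with $e_1\in P_1$, then $e_i\in P_1$ exactly when $i$ is odd; in particular the last edge $e_\ell$ (the unique $P$-edge at $t$) lies in $P_1$ --- the \emph{same} class as $e_1$ --- precisely when $\ell$ is odd. So the correct clause is that the endpoint edges at $s$ and at $t$ belong to the same colour class (this is exactly what the paper writes: $\texttt{Deg1}(P_1,s)\wedge\texttt{Deg0}(P_2,s)\wedge\texttt{Deg1}(P_1,t)\wedge\texttt{Deg0}(P_2,t)$), not to different classes as you assert. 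Your own sanity check exposes the error: for the single-edge path $P=\{st\}$ you take $P_1=\{st\}$, $P_2=\emptyset$, so the unique $P$-edge at $t$ is in $P_1$, violating your stated requirement that it lie in $P_2$ --- yet this is an odd path and must satisfy the formula. The fix is a one-line change, but as stated the formula is wrong.
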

\begin{proof}
To define $\texttt{OddPath}_{s,t}(P)$, we first define a series of simple formulas of monadic second-order logic
on our graph~$G=(V,E)$. 
Recall that there is an atomic symbol~$\texttt{Inc}(e,v)$ which holds exactly if $e \in E$, $v \in V$, and $e$ is incident to $v$ in~$G$. 

We start with formulas $\texttt{Deg0}(F,v)$ and $\texttt{Deg1}(F,v)$ which hold for an edge set $F \subseteq E$ and a vertex $v \in V$ if and only if
$v$ has degree~0 and~1, respectively, in $G[F]$:
\begin{align*}
\texttt{Deg0}(F,v) & \equiv \nexists e: e \in F \wedge \texttt{Inc}(e,v), \\
\texttt{Deg1}(F,v) & \equiv \exists e: e \in F \wedge \texttt{Inc}(e,v) \wedge \left(\forall f:(f \in F \wedge \texttt{Inc}(f,v)) \Rightarrow (e=f))\right).
\end{align*}
Using these, we can easily construct 
a formula $\texttt{Matching}(M)$ which holds if and only if the edge set~$M$ is a matching in~$G$:
    \[ \texttt{Matching}(M) \equiv \forall v: \texttt{Deg0}(M,v) \vee \texttt{Deg1}(M,v).\]

We next define formula $\texttt{PartitionE}(F,F_1,F_2)$ which holds for edge sets~$F$, $F_1$, and $F_2$ if and only if $(F_1,F_2)$ is a partition of~$F$: 
\begin{align*}
    \texttt{PartitionE}(F,F_1,F_2) \equiv {} & \bigl(\nexists e \in E:(e \in F_1 \wedge e \in F_2)\bigr) \\ 
    & \wedge \bigl(\forall e \in E: (e \in F_1 \vee e \in F_2) \Leftrightarrow  (e \in F)\bigr). 
\end{align*}
We proceed by defining the formula $\texttt{ConnectedE}(S)$ which holds if and only if the subgraph~$G[F]$ is connected for a given set~$F$ of edges:
\begin{align*} 
    \texttt{ConnectedE}(F) \equiv \forall F_1,F_2 : {} &\bigl( \texttt{PartitionE}(F,F_1,F_2) \wedge (\exists e_1 \in F_1) \wedge (\exists e_2 \in F_2) \bigr) \\
    & \Rightarrow \bigl(\exists v,e_1,e_2 : e_1 \in F_1  \wedge e_2 \in F_2 \wedge \texttt{Inc}(e_1,v) \wedge \texttt{Inc}(e_2,v)\bigr).
\end{align*}
        
We are now ready to define~$\texttt{OddPath}_{s,t}(P)$ as follows: 
\begin{align*}
    \texttt{OddPath}_{s,t}(P) \equiv {} & \texttt{ConnectedE}(P)\\
    & {} \wedge \bigl(\exists P_1, P_2: \texttt{PartitionE}(P,P_1,P_2) \wedge 
    \texttt{Matching}(P_1) \wedge \texttt{Matching}(P_2)
    \\
    & \hspace{60pt} {} \wedge \texttt{Deg1}(P_1,s) \wedge \texttt{Deg0}(P_2,s) \wedge
    \texttt{Deg1}(P_1,t) \wedge 
    \texttt{Deg0}(P_2,t)\bigr).
\end{align*}

To see the correctness of the formula, first assume that $S \subseteq E$ is an odd $(s,t)$-path in~$G$. Let $S_1$ and $S_2$ be the set of edges in~$S$ with an odd and even sequence number, respectively. 
Then (i) $(S_1, S_2)$ is a partition of~$S$,  
(ii) $S_1$ and $S_2$ are both matchings, 
(iii) $s$ and $t$ both have degree~1 in~$S_1$ and degree~$0$ in~$S_2$, and (iv) $G[S]$ is connected. Hence, $G$ satisfies $\texttt{OddPath}_{s,t}(S)$.

Conversely, suppose that there exists an edge set~$P$ such that $G$ satisfies $\texttt{OddPath}_{s,t}(P)$.
By definition, this means that $P$ is connected and can be partitioned into two matchings, $P_1$ and $P_2$, satisfying also that $s$ and $t$ both have degree~1 in~$P_1$ but have degree~0 in~$P_2$. 
Observe that a connected graph can only be partitioned into two matchings if and only if it is a path or a cycle. The degree constraints on~$s$ and~$t$ then ensure that $P$ is a path with $s$ and~$t$ as its endpoints, and moreover, it is an odd path (since $P_1$ is incident to both~$s$ and~$t$).
This proves the correctness of $\texttt{OddPath}_{s,t}(P)$.
\end{proof}

\begin{proof}[Proof of Theorem~\ref{thm:bounded-tw}]
By Lemma~\ref{lem:MSO-formula}, the \SOP{} problem can be expressed as a \emph{linear EMS extremum problem}~\cite{ALS91}, using the monadic second-order formula~$\texttt{OddPath}_{s,t}(P)$ with 
the evaluation function being the edge weight function~$w$, so that the evaluation term for~$P$ becomes $w(P)$.
Therefore, Theorem~\ref{thm:monadic_treewidth} implies that  for any class $\mathcal{K}$ of graphs of bounded treewidth, the value 
\[\min \left\{w(P): P\subseteq E, \ \texttt{OddPath}_{s,t}(P)\right\}
\]
can be computed in linear time for any graph $G=(V,E) \in \mathcal{K}$, edge weight function~$w\colon E \rightarrow \mathbb{R}$, and vertices $s,t \in V$, assuming that a tree-decomposition of~$G$ is given as part of the input.
Since it is possible to compute a tree-decomposition of width at most $2k+1$ of any $n$-vertex graph with treewidth at most~$k$ in $2^{\mathcal{O}(k)}n$ time by Theorem~\ref{thm:construct-treewidth}, the theorem follows.
\end{proof}

We remark that it is also possible to give a direct FPT-algorithm for \SOP{} parameterized by the treewidth of the input graph using dynamic programming techniques on a tree-decomposition. Although the naive method for solving \SOP{} on a graph with $n$ vertices and a tree-decomposition of width~$k$ would yield an algorithm running in~$k^{\mathcal{O}(k)}n^{\mathcal{O}(1)}$ time, 
it is possible to obtain an $c^k k^{\mathcal{O}(1)} n$ algorithm for some constant $c$ 
using the rank-based approach of Bodlaender et al.~\cite{BCKN15} in almost the same way  
as they do for the \textsc{Traveling Salesman} 
problem; see Appendix~\ref{sec:appendix-treewidth} for the details.

\section{Conclusion}
We have identified islands of tractability for the computationally hard \SOP{} problem. We gave a polynomial-time algorithm for the case when the set of negative-weight edges forms a tree. 
We developed an FPT algorithm for the problem when parameterized by the  number of negative-weight edges, or more generally, by the size of a maximum matching in the graph spanned by all negative-weight edges. We also proved that the \SOP{} problem is linear-time solvable on graphs of bounded treewidth. 

A natural direction for further research is to identify additional cases when the problem becomes computationally tractable. 
A starting point could be to generalize our results to the case when the negative-weight edges form $c \geq 2$ trees. Is \SOP{} solvable in polynomial time when $c$ is a constant? If so, is it fixed-parameter tractable with~$c$ as the parameter?

\paragraph{Acknowledgement.}
We would like to thank Kristóf Bérczi who provided a technical review of the manuscript and was part of the organizers of the 12th Emléktábla Workshop, Hungary, July 2022, where the collaboration of the authors was hosted and supported. We also would like to thank Jesper Nederlof for personal communication about their results in~\cite{BCKN15}.

\paragraph{Funding.}
Csaba Király was supported by the J\'anos Bolyai Research Scholarship of the Hungarian Academy of Sciences, by the \'UNKP-21-5 New National Excellence Program of the Ministry for Innovation and Technology, and by  the Hungarian Scientific Research Fund (OTKA grants FK128673 and PD138102). 
Lydia Mirabel Mendoza-Cadena was supported by the Lend\"ulet Programme of the Hungarian Academy of Sciences -- grant number LP2021-1/2021 -- and by the Ministry of Innovation and Technology of Hungary from the National Research, Development and Innovation Fund, financed under the ELTE TKP 2021-NKTA-62 funding scheme.
Ildik\'o Schlotter is supported by the Hungarian Academy of Sciences under its Momentum Programme (LP2021-2) and its J\'anos Bolyai Research Scholarship, 
and by the Hungarian Scientific Research Fund (OTKA grants K128611 and K124171). 
Yutaro Yamaguchi was supported by JSPS KAKENHI Grant Number 20K19743 and by the Osaka University Research Abroad Program.

\bibliographystyle{plain} 
\bibliography{odd}

\begin{appendices}
\section{Dynamic programming for bounded treewidth}
\label{sec:appendix-treewidth}
We assume that the reader is familiar with the technique and terminology developed by Bodlaender et al.\ in their paper~\cite{BCKN15}. We show the necessary tools for solving the \SOP{} problem when the graph has treewidth $k$, in a similar fashion as the dynamic programming approach for the \textsc{Traveling Salesmen} problem in~\cite{BCKN15}. Using the results in~\cite{BCKN15}, one can show that the running time is $c^k k^{\mathcal{O}(1)} n$, for some constant $c$.

Let us consider a nice tree decomposition $(\mathbb{T} ,\mathcal B)$ as defined in~~\cite{BCKN15}. Let $r$ be the root vertex of $\mathbb{T}$; we may assume that the root bag contains only vertices $\{s,t \} $, and we denote it by $B_{r}$. Given a node $x$ of $\mathbb{T}$, we denote its bag by $B_x$, and by~$V_x$ the set of all vertices contained in the bag of some descendant of~$x$. The subgraph~$G_x$ is then defined as $(V_x,E_x)$ where $E_x$ is the set of edges introduced in some descendant of~$x$. 

Observe that given a solution to our instance~$(G,w,s,t)$, that is, an odd $(s,t)$-path~$P$, the edges of~$P \cap E_x$ span a set of vertex-disjoint paths in the subgraph $G_x$. 
For the dynamic programming algorithm, we use $\mathbf{d} \colon B_x \to \{0,1,2 \}$ to encode the degree of each vertex in~$B_x$ in such a partial solution~$P \cap E_x$. 
Note also that $P \cap E_x$ naturally defines a perfect matching on those vertices of~$B_x$ that have degree~$1$ in~$P \cap E_x$. 
We encode this information in a partition $M \in \Pi _2 \left(  \mathbf{d}^{-1}(1) \right)$ where $\Pi_2(S)$ denotes the set of all partitions over a set $S$ in which each block has size~$2$. 
Finally, we encode the parity of $|P \cap E_x|$ in an integer~$p$. 

For each node $x \in V(\mathbb{T})$, we can now define the set $\mathcal{E}_x(\cdot)$ of partial solutions we aim to capture
and the table $A_x(\cdot)$ that we are going to compute at~$x$. 
More precisely, given a function
$\mathbf{d} \in \{ 0,1,2\}^{B_x}$, a partition $M \in \Pi _2 \left(  \mathbf{d}^{-1}(1) \right)$ and integer $p \in \{0,1\}$
we define
\begin{align*}
   \mathcal{E}_x(M, \mathbf{d}, p ) =  \biggl\{   & F \subseteq E_x  :
              \forall v: v\in B_x \Rightarrow \operatorname{deg}_F(v) = \mathbf{d}(v)  \\[-2pt]
            &\qquad \qquad\wedge \,  \forall v:v \in V_x \setminus B_x \Rightarrow \operatorname{deg}_F(v) \in \{ 0,2\} \\
            &\qquad \qquad \wedge \,  \text{$G_x[F]$ contains no cycle} \\
           &\qquad \qquad \wedge \,  \forall u,v \in B_x: \{ u,v\} \in M   \Rightarrow \text{$u$ and $v$ are connected in $G_x$[F]} \\
             &\qquad \qquad \wedge \,  |F| \equiv p \mod 2 \, \biggl\}; \\
    A_x(\mathbf{d},p) =  \biggl\{ &  \left( M, \min_{F \in \mathcal{E}_x(M, \mathbf{d}, p )} w(F)  \right) :  
            M \in \Pi _2 \left(  \mathbf{d}^{-1}(1) \right) \wedge \mathcal{E}_x(M, \mathbf{d},p) \neq \emptyset \biggl\}.
\end{align*}

Observe that the weight of a minimum-weight odd $(s,t)$-path can be found by taking the weight stored in the table $A_{r}(\mathbf{d},1)$ for the partition $\{\{s,t\}\}$, where $\mathbf{d}(s) = \mathbf{d}(t) = 1$ and $\mathbf{d}(v) \in \{ 0,2 \}$ for each $v \in B_r \setminus \{ s,t\}$. We consider the following recurrence for $A_x(\mathbf{d},p)$, depending on the type of~$x$. Note that we give the recurrence formulas using the operators over weighted partitions defined in~\cite{BCKN15}. Although Bodlaender et al.~\cite{BCKN15} use non-negative integer weights, their machinery also works for integer weights.

\paragraph{Leaf bag $B_x$.} We have $B_x = \emptyset$. 
Then $A_x(\cdot)$ is empty for all  possible values, except for
\begin{equation*}
    A_x(\emptyset,0) = \{ (\emptyset, 0)\}.
\end{equation*}

\paragraph{Introduce vertex $v$ with bag $B_x$.} Let $y$ be the child of $x$, that is, $B_x = B_y \cup \{ v \}$. For all $\mathbf{d} \in \{0,1,2 \}^ {B_x}$ and $p \in  \{ 0,1 \}$ we set 
\begin{equation*}
    A_x(\mathbf{d},p) =\begin{cases}
        A_y(\mathbf{d}\vert_{B_y},p ) &\text{if } \mathbf{d}(v) = 0, \\
        \emptyset & \text{otherwise,}
    \end{cases}
\end{equation*}
as $v$ has just been introduced, and so it has no neighbours in $G_x$; therefore, there is no change in the connectivity or in the parity of partial solutions.

\paragraph{Forget vertex $v$ with bag $B_x$.} Let $y$ be the child of $x$, that is, $B_y = B_x \cup \{ v \}$. For all $\mathbf{d} \in \{0,1,2 \}^ {B_x}$ and $p \in  \{ 0,1 \}$ we set
\begin{equation*}
    A_x(\mathbf{d},p) = A_y(\mathbf{d}[v \to 0],p ) 
    \, \cuparrow \, 
    A_y(\mathbf{d}[v \to 2],p ). 
\end{equation*}
Either $v$ is contained in an optimal solution~$P$ or not. If $v$ is contained in~$P$, then since $v$ cannot have neighbours outside $V_x$, the two neighbours of~$v$ in~$P$ must be in $V_x$. Hence, the degree of~$v$ in a partial solution must be~$0$ or~$2$. Again, there is no change in parity or in connectivity; however, we need to discard dominated partitions.

\paragraph{Introduce edge $e=uv$ with bag $B_x$.} Let $y$ be the child of $x$. We have $B_x = B_y$. For all $\mathbf{d} \in \{0,1,2 \}^ {B_x}$ and $p \in  \{ 0,1 \}$ we set 
\begin{align*}
    A_x(\mathbf{d},p) =& A_y(\mathbf{d},p )
    \, \cuparrow \\ 
     & \quad  \bigcuparrow \begin{cases}
            \emptyset & \text{if }\mathbf{d}(u) = 0 \vee \mathbf{d}(v) = 0 , \\
            \mathtt{glue}_w \left( uv, A_y (\mathbf{d}[u,v\to 0],1-p) \right) & \text{if }\mathbf{d}(u) = 1 \wedge \mathbf{d}(v) = 1 , \\
            \mathtt{proj}\left(  \{v\} , \mathtt{glue}_w \left( uv, A_y (\mathbf{d}[u\to 0,v\to 1],1-p) \right)\right)  & \text{if }\mathbf{d}(u) = 1 \wedge \mathbf{d}(v) = 2 , \\
            \mathtt{proj}\left(  \{u\} , \mathtt{glue}_w \left( uv, A_y (\mathbf{d}[u\to 1,v\to 0],1-p) \right)\right)  & \text{if }\mathbf{d}(u) = 2 \wedge \mathbf{d}(v) = 1 , \\
            \mathtt{proj}\left(  \{u,v\} , \mathtt{glue}_w \left( uv, A_y (\mathbf{d}[u,v\to 1],1-p) \right)\right)  & \text{if }\mathbf{d}(u) = 2 \wedge \mathbf{d}(v) = 2. \\
        \end{cases}
\end{align*}
Clearly, a partial solution may not use the edge $e$, and we keep track of such partial solutions by keeping all undominated entries from the table $A_y(\mathbf{d},p )$ computed for the child $y$. If $e$ is contained in a partial solution~$F$ for~$x$ compatible with~$\mathbf{d}$ and~$p$, then the degree of~$u$ in $F \cap E_y=F \setminus \{e\}$ is smaller by~$1$ than its degrees in~$F$, and the same holds for~$v$. 
In particular, $\mathbf{d}(u) = 0$ or $\mathbf{d}(v) = 0$ is not possible. Furthermore, $F$ and $F \cap E_y$ have different parities. The simple case is when both $ \mathbf{d}(u) = 1$ and $\mathbf{d}(v) = 1$: then we can safely add the edge $e$ and use the parity that differs from~$p$; 
note that this means inserting the block $\{u,v\}$ into the partition, which we can achieve by using the $\mathtt{glue}_w$ operation.
If $\mathbf{d}(u) = 2$ or $\mathbf{d}(v) = 2$, then we first need to glue the partitions containing~$u$ and~$v$, and then we need to use the $\mathtt{proj}$ operation to remove those vertices from the partition that have degree~$2$ in~$F$ but degree~$1$ in~$F \cap E_y$ (such vertices can be~$u$, $v$, or both). Note that the $\mathtt{proj}(\{u,v\},\cdot)$ operation filters out those partitions where projecting out $u$ and $v$ decreases the number of blocks; hence, entries of $A_y(\mathbf{d},p )$ with partitions containing $\{u,v\}$ will not be used in the case $\mathbf{d}(u) = \mathbf{d}(v) = 2$. This way we avoid creating cycles.

\paragraph{Join with bag $B_x$.} Let $y$ and $z$ be the children of $x$, that is, $B_x = B_y = B_z$; 
let us now denote the ``left'' degree vector for child~$y$ by  $\mathbf{l}$ and the ``right'' degree vector for child~$z$ by~$\mathbf{r}$. For all $\mathbf{d} \in \{0,1,2 \}^ {B_x}$ and $p \in  \{ 0,1 \}$ we set
\begin{equation*}
    A_x(\mathbf{d},p) = \displaystyle \mathop{\bigcuparrow}_{\substack{\mathbf{l} + \mathbf{r} = \mathbf{d}, \\ p_1 + p_2 \,\equiv \, p \! \mod 2}}  
        \mathtt{proj}\left( \mathbf{d}^{-1}(2) \setminus (\mathbf{l}^{-1}(2) \cup \mathbf{r}^{-1}(2)), \mathtt{join}(A_y(\mathbf{l},p_1 ), A_z(\mathbf{r},p_2 )) \right).
\end{equation*}
Observe that $\mathbf{l} + \mathbf{r} = \mathbf{d}$ is a vector summation as $\mathbf{l},\mathbf{r},\mathbf{d} \in \{0,1,2 \}^ {B_x}$. Combining $(M_1,w_1) \in A_y(\mathbf{l},p_1)$ and $(M_2,w_2) \in A_z(\mathbf{r},p_2)$ to get an entry for $A_x(\mathbf{d},p)$ can be done if and only if $M_1 \cup M_2$ is acyclic and we have the correct parity. Maintaining acyclicity is equivalent to asking that the vertices in $\mathbf{d}^{-1}(2) \setminus (\mathbf{l}^{-1}(2) \cup \mathbf{r}^{-1}(2))$, which are the vertices that have degree 1 in the partial solutions for $y$ and for~$z$ but degree~2 in the partial solution for~$x$, are connected to vertices in~$\mathbf{d}^{-1}(1)$ in the resulting partition $M_1 \sqcup M_2$. We use the projection operation for ensuring this.

\bigskip
\noindent
Using the above formulas and applying the framework of~\cite{BCKN15}, it is straightforward to obtain an algorithm for \SOP{} that runs in $c^k k^{\mathcal{O}(1)} n$ time for some constant~$c$ on graphs with treewidth at most~$k$, where $n$ denotes the number of vertices.
\end{appendices}
\end{document}